\newcommand{\smr}[0]{Leader-Aware SMR\xspace}
\newcommand{\code}[1]{\ensuremath{\mathtt{#1}}\xspace}
\newcommand{\namedref}[2]{\hyperref[#2]{#1~\ref*{#2}}}
\newcommand{\equationref}[1]{\hyperref[#1]{(\ref*{#1})}}
\newcommand{\lemmaref}[1]{\namedref{Lemma}{#1}}
\newcommand{\lineref}[1]{\namedref{line}{#1}}
\newcommand{\algorithmref}[1]{\namedref{Algorithm}{#1}}
\newcommand{\propertyref}[1]{\namedref{Property}{#1}} 
\newcommand{\figureref}[1]{\namedref{Figure}{#1}} 
\newcommand{\sectionref}[1]{\namedref{Section}{#1}} 
\newcommand{\appendixref}[1]{\namedref{Appendix}{#1}} 
\newcommand{\carousel}[0]{Carousel\xspace}
\newcommand{\lr}[1]{\langle #1 \rangle}
\newcommand{\shir}[1]{[\textbf{\textcolor{red}{Shir:}} {\footnotesize \textcolor{red}{#1}]}}
\newcommand{\sasha}[1]{[\textbf{\textcolor{blue}{Sasha:}} {\footnotesize \textcolor{blue}{#1}]}}
\newcommand{\lef}[1]{[\textbf{\textcolor{brown}{Lefteris:}} {\footnotesize \textcolor{brown}{#1}]}}
\newcommand{\rati}[1]{[\textbf{\textcolor{green}{Rati:}} {\footnotesize \textcolor{green}{#1}]}}
\newcommand{\alberto}[1]{[\textbf{\textcolor{purple}{Alberto:}} {\footnotesize \textcolor{purple}{#1}]}}
\newcommand{\dahlia}[1]{[\textbf{\textcolor{orange}{Dahlia:}} {\footnotesize \textcolor{orange}{#1}]}}
\long\def\com#1{}
\def\cameraReady{} 
\begin{document}

\title{Be Aware of Your Leaders}

\ifdefined\cameraReady
\author{
Shir Cohen\inst{1,2} \and
Rati Gelashvili\inst{1} \and
Lefteris Kokoris Kogias\inst{1,3} \and
Zekun Li\inst{1}
\and Dahlia Malkhi\inst{1} \and 
Alberto Sonnino\inst{1} \and 
Alexander Spiegelman\inst{1}
}
\institute{Novi Research \and Technion \and IST Austria}
\else
\author{}
\institute{}
\fi

\authorrunning{S. Cohen et al.}

\maketitle              
\begin{abstract}
Advances in blockchains have influenced the State-Machine-Replication (SMR) world and many state-of-the-art blockchain-SMR solutions are based on two pillars: \emph{Chaining} and \emph{Leader-rotation}.
A predetermined round-robin mechanism used for Leader-rotation, however,
has an undesirable behavior: crashed parties become designated leaders infinitely often, slowing down overall system performance.  
%
%
%
In this paper, we provide a new \smr framework that, among other desirable properties, formalizes a  \emph{Leader-utilization} requirement that bounds the number of rounds whose leaders are faulty in crash-only executions. 
\vspace{0.33em}

%
%
%
We introduce \carousel, a novel, reputation-based Leader-rotation solution to achieve \smr. The challenge in adaptive Leader-rotation is that it cannot rely on consensus to determine a leader, since consensus itself needs a leader. \carousel uses the available on-chain information to determine a leader locally and achieves Liveness despite this difficulty. A HotStuff implementation fitted with \carousel demonstrates drastic performance improvements: it increases throughput over 2x in faultless settings and provided a 20x throughput increase and 5x latency reduction in the presence of faults.

\end{abstract}

\section{Introduction}

Recently, Byzantine agreement protocols in the eventually synchronous model such as Tendermint~\cite{buchman2016tendermint}, Casper FFG~\cite{buterincasper}, and HotStuff~\cite{yin2019hotstuff}, brought two important concepts from the world of blockchains to the traditional State Machine Replication (SMR)~\cite{lamport2019time} settings, \emph{Leader-rotation} and \emph{Chaining}. More specifically, these algorithms operate by designating one party as \textit{leader} of each round to propose the next block of transactions that extends a \textit{chained} sequence of blocks.
Both properties depart from the approach used by classical protocols such as PBFT~\cite{castro1999practical}, Multi-Paxos~\cite{lamport2001paxos} and Raft~\cite{ongaro2014search} (the latter two in benign settings). In those solutions, a stable leader operates until it fails and then it is replaced by a new leader.
Agreement is formed on an immutable sequence of indexed (rather than chained) transactions, organized in slots.


Leader-rotation is important in a Byzantine setting, since parties should not trust each other for load sharing, reward management, resisting censoring of submitted transactions, or ordering requests fairly~\cite{kelkar2020order}.
The advantage of Chaining is that it simplifies the leader handover since in the common case the chain eliminates the need for new leaders to catch up with outcomes from previous slots. 


In the permissioned SMR settings~\cite{fabric}, most existing Leader-rotation mechanisms use a round-robin approach to rotate leaders~\cite{DiemBFT,yin2019hotstuff,chan2020streamlet}.
This guarantees that honest parties get a chance to be leaders infinitely often, which is sufficient to drive progress and satisfy \emph{Chain-quality}~\cite{garay2015bitcoin}.
Roughly speaking, the latter stipulates that the number of blocks committed to the chain by honest parties is proportional to the honest nodes' percentage. 
The drawback of such a mechanism is that it does not bound the number of faulty parties which are designated as leaders during an execution.
This has a negative effect on latency even in crash-only executions, as each crashed leader delays progress. 
Similarly to XFT~\cite{liu2016xft}, we seek to improve the performance in such executions, while we also maintain Chain-quality to thwart Byzantine attacks.



In this paper, we propose a leader-rotation mechanism, 
\carousel, 
that enjoys both worlds.
\carousel satisfies non-zero Chain-quality, and at the same time, bounds the number of faulty leaders in crash-only executions after the global stabilization time (GST), a property we call \emph{Leader-utilization}.
The \carousel algorithm leverages Chaining to execute purely locally using information available on the chain, avoiding any extra communication.
To capture all requirements, we formalize  a \emph{\smr} problem model, which alongside Agreement, Liveness and Chain-quality, also requires Leader-utilization. We prove that \carousel satisfied the \smr requirements.


The high-level idea to satisfy Leader-utilization is to track active parties via the records of their participation (e.g. signatures) at the committed chain prefix 
and elect leaders among them.
However, if done naively, the adversary can exploit this mechanism to violate Liveness or Chain-quality.
The challenge is that there is no consensus on a committed prefix to determine a leader, since consensus itself needs a leader. Diverging local views on committed prefixes may be effectuated, for instance, by having a Byzantine leader reveal an updated head of the chain to a subset of the honest parties. Hence, \carousel may not have agreement on the leaders of some rounds, but nevertheless guarantees Liveness and Leader-utilization after GST. 



To focus on our leader-rotation mechanism, we abstract away all other SMR components by defining an SMR framework.
Similarly to~\cite{spiegelman2021ace}, we capture the logic and properties of forming and certifying blocks of transactions in each round in a \emph{Leader-based round (LBR)} abstraction, and rely on a Pacemaker abstraction~\cite{naor2020expected,naor2019cogsworth,bravo2020making} for round synchronization.
We prove that when instantiated into this framework, \carousel yields a \smr protocol.
Specifically, we show (1) for Leader-utilization: at most $O(f^2)$ faulty leader may be elected in crash-only executions (after GST); and (2) for Chain-quality: one out of $O(f)$ blocks is authored by an honest party in the worst-case. 

We provide an implementation of \carousel in a HotStuff-based system and an evaluation that demonstrates a significant performance improvement. Specifically, we get over 2x throughput increase in faultless settings, and 20x throughput increase and 5x latency reduction in the presence of faults.
Our mechanism is adopted in the most recent version of DiemBFT~\cite{DiemBFT}, a deployed HotStuff-based system.

\section{Model and Problem Definition}

We consider a  message-passing model with a set of $n$ parties $\Pi=\{p_1,\dots,p_n\}$, out of which $f<\frac{n}{3}$ are subject to failures. A party is \emph{crashed} if it halts prematurely at some point during an execution. If it deviates from the protocol it is \emph{Byzantine}. An \emph{honest} party never crashes or becomes Byzantine.
We say that an execution is \emph{crash-only} if there are no Byzantine failures therein.

For the theoretical analysis we assume an eventually synchronous communication model~\cite{dwork1988consensus} in which there is a global stabilization time (GST) after which the network becomes synchronous.
That is, before GST the network is completely asynchronous, while after GST messages arrive within a known bounded time, denoted as $\delta$.




As we later describe, we abstract away much of the SMR implementation details by defining and using primitives. 
Therefore, our Leader-rotation solution is model agnostic and the adversarial model depends on the implementation choices for those primitives. 




\subsection{Leader-Aware SMR}
In this section we introduce some notation and then define the \smr problem.
Roughly speaking, \smr captures the desired properties of the Leader-rotation mechanism in SMR protocols that are leader-based.

An SMR protocol consists of a set of parties aiming to maintain a growing chain of \emph{blocks}.
Parties participate in a sequence of rounds, attempting to form a block per round.
In \smr, each round is driven by a leader. We capture these rounds via the Leader-based round (LBR) abstraction defined later. 

A block consists of transactions and the following meta-data:
\begin{itemize}
    \item A (cryptographic) link to a \emph{parent} block. Thus, each block implicitly defines a chain to the genesis block.
    \item A round number in which the block was formed.
    \item The author id of the party that created the block. 
    \item A certificate that (cryptographically) proves that $2f+1$ parties endorsed the block in the given round and with the given author.
    We assume that it is possible to obtain the set of $2f+1$ endorsing parties\footnote{This can be achieved by multi-signature schemes which are practically as efficient as threshold signatures~\cite{multi-signatures}.}.
\end{itemize}
Note that having a round number and the author id as a part of the block is not strictly necessary, but they facilitate formalization of properties and analysis.
For example, an \emph{honest block} is defined as a block authored by an honest party and a \emph{Byzantine block} is a block authored by a Byzantine party.




We assume a predicate $\code{certified}(B,r)\in \{true, false\}$ that locally checks whether the block has a valid certificate, i.e. it has $2f+1$ endorsements for round $r$.
If $\code{certified}(B,r)=true$ we say that $B$ is a \emph{certified} block of round $r$. When clear from context, we say that $B$ is \emph{certified} without explicitly mentioning the round number.


An SMR protocol does not terminate, but rather continues to form blocks. Each block $B$ determines its \emph{implied} chain starting from $B$ to the genesis block via the parent links.
We use notation $B \longrightarrow B'$, saying $B'$ \emph{extends} $B$, if block $B$ is on $B'$'s implied chain.
Honest parties can \emph{commit} blocks in some rounds (but usually not all). A committed block indirectly commits its implied chain.
An SMR protocol must satisfy the following:
\begin{definition}[\smr]
\begin{itemize}
%
    \item \textbf{Liveness:} An unbounded number of blocks are committed by honest parties.

    \item \textbf{Agreement:} If an honest party $p_i$ has committed a block $B$, then for any block $B'$ committed by any honest party $p_j$ either $B \longrightarrow B'$ or $B' \longrightarrow B$. 
    
    \item \textbf{Chain-quality:} 
    For any block $B$ committed by an honest party $p_i$, the proportion of Byzantine blocks on $B$'s implied chain is bounded.
    
    \item \textbf{Leader-Utilization:} In crash-only executions, after GST, the number of rounds $r$ for which no honest party commits a block formed in $r$ is bounded. 
\end{itemize}
\end{definition}
The first two properties are common to SMR protocols.
While most SMR algorithms satisfy the above mentioned Liveness condition, a stronger Liveness property can be defined, requiring that each honest party commits an unbounded number of blocks. This property can be easily be achieved by  an orthogonal forwarding mechanism, where each honest leader that creates a block explicitly sends it to all other parties.
A notion of Chain-quality that bounds the adversarial control over chain contents was first suggested by Garay et al.~\cite{garay2015bitcoin}.
We introduce the Leader-utilization property to capture the quality of the Leader-rotation mechanism in crash-only executions.

\section{\smr: The Framework}
In order to isolate the Leader-rotation problem in \smr protocols, we abstract away the remaining logic into two components.
First, similar to~\cite{spiegelman2021ace,spiegelman2020search} we capture the logic to form and commit blocks by the \emph{Leader-based round (LBR)} abstraction (\sectionref{subsec:lbr}). 
We follow~\cite{bravo2020making,naor2020expected} and capture round synchronization by the Pacemaker abstraction (\sectionref{subsec:pacemaker}).
These two abstractions can be instantiated with known implementations from existing SMR protocols.

In~\sectionref{subsec:le} we define the core API for Leader-rotation and combine it with the above components to construct an SMR protocol.
In~\sectionref{sec:le} we present a Leader-rotation algorithm that can be easily computed based on locally available information and makes the construction a \smr.


\subsection{Leader-based round (LBR)}
\label{subsec:lbr}
The LBR abstraction exposes to each party $p_i$ an API to invoke $LBR(r,\ell)$, where $r \in \mathbb{N}$ is a round number and $\ell$ is the leader of round $r$ according to party $p_i$.
Intuitively, a leader-based round captures an attempt by parties to certify and commit a block formed by the leader\footnote{Existing SMR protocols may have separate rounds (and even leaders) for forming and committing blocks, but this distinction is not relevant for the purposes of the paper and LBR abstraction is defined accordingly.} - which naturally requires sufficiently many parties to agree on the identity of the leader.
We assume that non-Byzantine parties can only endorse a block $B$ with round number $r$ and author $\ell$ by calling $LBR(r,\ell)$.

Every LBR invocation returns within $\Delta_l > c\delta$ time, where $c$ depends on the specific LBR implementation (i.e., each round requires a causal chain of $c$ messages to complete).
That is, $\Delta_l$ captures the inherent timeouts required for eventually synchronous protocols.
We say that round $r$ has $k \leq n$ \emph{LBR-synchronized($\ell$)} invocations if $k$ honest parties invoke $LBR(r,\ell)$ after GST and within $\Delta_l - c \delta$ time of each other with the same party $\ell$\footnote{LBR-synchronized requires that the corresponding execution intervals have a shared intersection lasting $\geq c\delta$ time.
}. 

The return value of an LBR invocation in round $r$ is always a block with a round number $r' \leq r$.
The intention is for LBR invocations to return gradually growing committed chains.
Occasionally, there is no progress, in which case the invocations are allowed to return a committed block whose  round $r'$ is smaller than $r$.  
Formally, the output from LBR satisfies the following properties:
\begin{definition}[LBR]
\label{def:LBR}
\begin{itemize}
    \item \textbf{Endorsement:} For any block $B$ and round $r$, if $\code{certified}(B,r) = true$, then the set of endorsing parties of $B$ contains $2f+1$ parties. 
    
    \item \textbf{Agreement:} If $B$ and $B'$ are certified blocks that are each returned to an honest party from an LBR invocation, then either $B \longrightarrow B'$ or $B' \longrightarrow B$.
    
    
    \item \textbf{Progress:}
    If there are $k \geq 2f+1$ LBR-synchronized($\ell$) invocations at round $r$ and $\ell$ is honest, then they all return a certified $B$ with round number $r$ authored by $\ell$.

    \item \textbf{Blocking:} If an non-Byzantine party $\ell$ never invokes $LBR(r,\ell)$, then no  $LBR(r,\ell)$ invocation may return a certified block formed in round $r$.
    
    \item \textbf{Reputation:}  If a non-Byzantine party $p$ never invokes $LBR$ for round $r$, then any certified block $B$ with round number $r$ does not contain $p$ among its endorsers.

\end{itemize}
\end{definition}
The LBR definition intends to capture just the key properties required for round abstraction in SMR protocols but leaves room for various interesting behavior.
For example, if the progress preconditions are not met at round $r$, then some honest parties may return a block $B$ for round $r$ while others do not. 
Moreover, in this case the adversary can \emph{hide} certified blocks from honest parties and reveal them at any point via the LBR return values.





\subsection{The Pacemaker}
\label{subsec:pacemaker}

The Pacemaker~\cite{naor2019cogsworth,bravo2020making,naor2020expected} component is a commonly used abstraction, which ensures that, after GST, parties are synchronized and participate in the same round long enough to satisfy the LBR progress.
We assume the following:

\begin{definition}[Pacemaker]
\label{def:pacemaker}
The Pacemaker eventually produces \code{new\_round(r)} notifications at honest parties for each round $r$.
Suppose for some round $r$ all \code{new\_round(r)} notifications at non-Byzantine parties occur after GST, the first of which occurs at time $T_f$, and the last of which occurs at time $T_l$. 
Then no non-Byzantine party receives a \code{new\_round(r+1)} notification before $T_l+\Delta_p$ and $T_l-T_f\leq \delta$.
The Pacemaker can be instantiated with any parameter $\Delta_p > 0$. 

\end{definition}

To combine the LBR and Pacemaker components in to an SMR protocol in~\sectionref{subsec:le} we fix $\Delta_p =\Delta_l$. 
Note that by using the above definition, the resulting protocol is not responsive since parties wait $\Delta_p$ before advancing rounds.
This can easily be fixed by using a more general Pacemaker definitions from~\cite{naor2019cogsworth,bravo2020making,naor2020expected}.
However, we chose the simplified version above for readability purposes since the Pacemaker is orthogonal to the thesis of our paper.

\subsection{Leader-rotation - the missing component}
\label{subsec:le}
In~\algorithmref{alg:smr} we show how to combine the LBR and Pacemaker abstractions into a leader-based SMR protocol. 
The missing component is the Leader-rotation mechanism, which exposes an $\code{choose\_leader}(r,B)$ API.
It takes a round number $r \in \mathbb{N}$ and a block $B$ and returns a party $p \in \Pi$.
The \code{choose\_leader} procedure is locally computed by each honest party at the beginning of every round. 

The Agreement property of~\algorithmref{alg:smr} follows immediately from the Agreement property of LBR, regardless of \code{choose\_leader} implementation. 
In~\appendixref{appendix} we prove that~\algorithmref{alg:smr} satisfies liveness as long as all honest parties follow the same \code{choose\_leader} procedure and that this procedure returns the same honest party at all of them infinitely often.
%
In the next section we instantiate~\algorithmref{alg:smr} with \carousel: a specific \code{choose\_leader} implementation to obtain a \smr protocol.
That is, we prove that~\algorithmref{alg:smr} with \carousel satisfies liveness, Chain-quality, and Leader-utilization.




\begin{algorithm}
\caption{Constructing SMR: code for party $p_i$}
\label{alg:smr}
\begin{algorithmic}[1]

\State $commit\_head \gets genesis$


\Upon {\code{new\_round}(r)}
    \State \emph{leader} $\gets$ \code{choose\_leader}(r, $commit\_head$) \label{l:elect}
    \State $B \gets $\emph{LBR(r,leader)} \label{l:lbr}
    
    \If{$commit\_head \longrightarrow B$}
    
        \State commit $B$ \label{l:commit}
        \Comment{all blocks in $B$'s implied chain that were not yet committed.} 
        \State $commit\_head \gets B$ \label{l:update_head}

    \EndIf

\EndUpon
\algstore{part1}

\end{algorithmic}
\end{algorithm}

\section{\carousel: A Novel Leader-Rotation Algorithm}
\label{sec:le}
In this section, we present \carousel -- our Leader-rotation mechanism.
The pseudo-code is given in~\algorithmref{alg:le}, which combined with~\algorithmref{alg:smr} allows to obtain the first \smr protocol.

We use reputation to avoid crashed leaders in crash-only executions.
Specifically, at the beginning of round $r$, an honest party checks if it has committed a block $B$ with round number $r-1$.
In this case, the set of endorsers of $B$ are guaranteed to not have crashed by round $r$.
For Chain-quality purposes, the $f$ latest authors of committed blocks are excluded from the set of endorsers, and a leader is chosen deterministically from the remaining set.


If an honest party has not committed a block with round number $r-1$, it uses a round-robin fallback scheme to elect the round $r$ leader.
Notice that different parties may or may not have committed a block with round number $r-1$ before round $r$.
In fact, the adversary has multiple ways to cause such divergence, e.g. Byzantine behavior, crashes or message delays.
As a result parties can disagree on the leader identity, and potentially compromise liveness.
We prove, however, that \carousel satisfies liveness, as well as leader utilization and Chain-quality.
Specifically, we show that (1) the number of rounds $r$ for which no honest party commits a block formed in $r$ is bounded by $O(f^2)$; and at least one honest block is committed $5f+2$ rounds. 
The argument is non-trivial, since for example, we need to show that the adversary cannot selectively alternate the fallback and reputation schemes to control the Chain-quality.

\begin{algorithm}
\caption{Leader-rotation: code for party $p_i$}
\label{alg:le}
\begin{algorithmic}[1]
\algrestore{part1}
\Procedure{$\code{choose\_leader}$}{$r, commit\_head$}

    \State $last\_authors \gets \emptyset$
    
    \If {$commit\_head.round\_number\neq r-1$ }
        \State \Return $(r\mod n)$ \label{l:return_rr}
        \Comment{round-robin fallback}
    \EndIf
    \State $active \gets commit\_head.endorsers$ \label{l:set_active}
    
    \State $block \gets commit\_head$

    \While{$last\_authors < f \wedge block \neq genesis $}
    
    
        \State $last\_authors\gets last\_authors\cup \{block.author\}$
        
        \State $block\gets block.parent$
    
    \EndWhile
    
    \State $leader\_candidates \gets active\setminus last\_authors$ \label{l:set_optional}
    
    \State return $leader\_candidates.pick\_one()$ \Comment{deterministically pick from the set} \label{l:return_rep}

\EndProcedure

\end{algorithmic}
\end{algorithm}





    
    
        
    
    
    





\subsection{Correctness}

\subsubsection{Leader-Utilization.}
In this section, we are concerned with the protocol efficiency against crash failures.
We consider time after GST, and at most $f$ parties that may crash during the execution but follow the protocol until they crash (i.e., non-Byzantine).  
We say that a party $p$ crashes in round $r$ if $r+1$ is the minimal number for which $p$ does not invoke $LBR$ in~\lineref{l:lbr}.
Accordingly, we say that a party is \emph{alive} at all rounds before it crashes.  
In addition, we say that a round $r$ occurs after GST if all \code{new\_round}(r) notifications at honest parties occur after GST.

We start by introducing an auxiliary lemma which extends the LBR Progress property for crash-only executions. 
Since in a crash-only case faulty parties follow the protocol before they crash, honest parties cannot distinguish between an honest leader and an alive leader that has not crashed yet. 
Hence, the LBR Progress property hold even if the leader crashes later in the execution.
Formal proof of the following technical lemma, using indistinguishability arguments, appears in~\appendixref{appendix}.

\begin{restatable}{lemma}{progressindis}
\label{lemma:reflecting_blocking}
In a crash-only execution, let $r$ be a round with $k \geq 2f+1$ LBR-synchronized($\ell$) invocations, such that $\ell$ is alive at round $r$, then these $k$ invocations return a certified $B$ with round number $r$ authored by $\ell$.
\end{restatable}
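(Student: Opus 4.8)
The plan is to reduce the statement to the Progress property of LBR (\definitionref{def:LBR}) via an indistinguishability argument, exactly as the surrounding text hints. Fix the given crash-only execution $E$, round $r$, leader $\ell$ and the $k \geq 2f+1$ honest parties making LBR-synchronized($\ell$) invocations. If $\ell$ never crashes in $E$ it is honest, and LBR Progress applies verbatim; so assume $\ell$ crashes at some round $r_c$. Since $\ell$ is \emph{alive} at round $r$, we have $r_c > r$, and in particular $\ell$ executes round $r$ exactly as an honest party would: it invokes $LBR(r,\ell)$ and emits all of its round-$r$ messages before any crash. The goal is to build a twin execution $E'$ that is identical to $E$ except that $\ell$ never crashes, argue that $E'$ is a valid execution in which $\ell$ is honest, and then transport the conclusion of LBR Progress from $E'$ back to $E$.

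First I would construct $E'$: $\ell$ continues to follow the protocol past the point where it halted in $E$, every other party runs identical code, and the scheduler replays every message already present in $E$ with the same delivery times. The only messages on which $E'$ and $E$ can differ are those $\ell$ generates in rounds $\geq r_c > r$, which I schedule as a legal eventually-synchronous delivery (within $\delta$ after GST). Then $E'$ has at most $f$ faulty parties (one fewer crash than $E$), the same $k$ honest parties invoke $LBR(r,\ell)$ within the same time window, so round $r$ still has $k \geq 2f+1$ LBR-synchronized($\ell$) invocations, and now $\ell$ is honest throughout $E'$. The heart of the argument is the claim that the $k$ honest parties have identical views in $E$ and $E'$ up to the moment each returns from its round-$r$ $LBR$ call. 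By \algorithmref{alg:smr}, party $p_i$ invokes that call upon its \code{new\_round(r)} notification, occurring by time $T_l$, and the call returns within $\Delta_l$, hence by $T_l+\Delta_l$. Because the SMR instantiation sets $\Delta_p=\Delta_l$, the Pacemaker (\definitionref{def:pacemaker}) guarantees that no non-Byzantine party receives \code{new\_round(r+1)} before $T_l+\Delta_p=T_l+\Delta_l$. Therefore every message by which $E'$ differs from $E$ — namely $\ell$'s round-$(\geq r{+}1)$ messages and anything they causally trigger — is sent no earlier than $T_l+\Delta_l$ and so arrives strictly after all round-$r$ returns. Consequently the $k$ parties cannot distinguish $E$ from $E'$ before returning, and their round-$r$ return values coincide in the two executions.

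It then remains to close the loop: in $E'$, $\ell$ is honest and round $r$ carries $k\geq 2f+1$ LBR-synchronized($\ell$) invocations, so the Progress property of LBR yields that all $k$ invocations return a certified $B$ of round number $r$ authored by $\ell$; by the indistinguishability established above, the identical return values occur in $E$, which is precisely the lemma. I expect the main obstacle to be the timing step of the indistinguishability argument — confining all divergence between $E$ and $E'$ to rounds after $r$ and showing those messages land after every round-$r$ return. This rests on two ingredients that must be stated carefully: the reading of ``alive at round $r$'' forcing $\ell$ to complete round $r$ honestly (so the executions can only diverge later), and the pairing of the LBR return bound $\Delta_l$ with the Pacemaker spacing $\Delta_p=\Delta_l$, which is exactly what prevents $\ell$'s post-crash messages in $E'$ from ever reaching the $k$ parties in time to alter their round-$r$ outputs.
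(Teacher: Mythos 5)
Your proposal is correct and follows essentially the same route as the paper's proof: construct a twin crash-only execution in which $\ell$ never crashes (hence is honest), apply the LBR Progress property there, and use the definition of ``alive at round $r$'' together with the Pacemaker spacing $\Delta_p=\Delta_l$ to confine all divergence to after every round-$r$ $LBR$ invocation has returned, giving indistinguishability. No gaps beyond the same boundary-timing informality the paper itself tolerates.
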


Furthermore, if no party crashes in a given round and the preconditions of the adapted LBR Progress conditions are met a block is committed in that round and another alive leader is chosen.

\begin{lemma}
\label{lemma:alive_leader}
If the preconditions of~\lemmaref{lemma:reflecting_blocking} hold and no party crashes in round $r$,
then $k \geq 2f+1$ honest parties commit a block for round $r$ and return the same leader $\ell'$ at~\lineref{l:elect} of round $r+1$ and $\ell'$ is alive at round $r$.


\end{lemma}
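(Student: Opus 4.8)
The plan is to pin down the single block produced in round $r$, trace how \algorithmref{alg:smr} turns it into a committed head shared by all $k$ invokers, and then read off the round-$(r+1)$ leader from \algorithmref{alg:le}. First I would apply \lemmaref{lemma:reflecting_blocking}: since its preconditions hold, all $k \geq 2f+1$ LBR-synchronized($\ell$) invocations return the \emph{same} certified block $B$ with round number $r$ authored by $\ell$. Note that, by the definition of LBR-synchronized, these $k$ invokers are honest, hence they never crash and do execute the round-$(r+1)$ code.

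Next I would establish that each such party $p$ commits $B$, i.e. that the guard at \lineref{l:commit} of round $r$ evaluates to true. Here $commit\_head$ is either the genesis (which extends into every chain) or a block $p$ committed in an earlier round, hence a certified block previously returned to $p$ by $LBR$; in the latter case the LBR Agreement property forces either $commit\_head \longrightarrow B$ or $B \longrightarrow commit\_head$. Because $p$ advances rounds monotonically (Pacemaker) and commits only blocks whose round number is at most the current round, we have $commit\_head.round\_number \leq r-1 < r = B.round\_number$; using that round numbers strictly increase along parent links, $B \longrightarrow commit\_head$ is impossible, leaving $commit\_head \longrightarrow B$. Thus $p$ commits $B$ and sets $commit\_head \gets B$ at \lineref{l:update_head}, so all $k$ honest parties commit the round-$r$ block $B$.

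Now consider \lineref{l:elect} of round $r+1$ at any of these $k$ parties: it calls $\code{choose\_leader}(r+1, commit\_head)$ with $commit\_head = B$. Since $B.round\_number = r = (r+1)-1$, the round-robin fallback guard fails and the reputation branch runs. As all $k$ parties hold the identical block $B$, they obtain the identical set $active = B.endorsers$ at \lineref{l:set_active}, and walking the parent links of $B$ to collect the $f$ latest authors is deterministic given $B$, so they obtain the same $last\_authors$ and hence the same $leader\_candidates = active \setminus last\_authors$. By the LBR Endorsement property $|active| \geq 2f+1$, and we remove at most $f$ authors, so $|leader\_candidates| \geq f+1 \geq 1$ and the deterministic $pick\_one()$ returns the same $\ell'$ at all $k$ parties. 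Finally, $\ell' \in leader\_candidates \subseteq B.endorsers$, so $\ell'$ endorses the certified round-$r$ block $B$; in a crash-only execution $\ell'$ is non-Byzantine, so the LBR Reputation property applies, and contrapositively $\ell'$ must have invoked $LBR$ at round $r$, which by the definition of crashing means $\ell'$ is alive at round $r$. Together these give all three conclusions.

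I expect the commit argument of the second paragraph to be the main obstacle: a priori the $k$ parties' committed heads differ, and the adversary may have revealed different prefixes to them, so establishing $commit\_head \longrightarrow B$ \emph{uniformly} is the delicate point. The argument rests on combining LBR Agreement with the round-number ordering (each $commit\_head$ has round below $r$, while rounds increase along the chain) and on the fact that honest parties advance through rounds monotonically; the remaining steps are then a direct, deterministic reading of \algorithmref{alg:le}.
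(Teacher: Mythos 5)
Your proof follows essentially the same route as the paper's: apply \lemmaref{lemma:reflecting_blocking} to get a common certified block $B$, argue the $k$ honest parties commit it, and then read the round-$(r+1)$ leader off the reputation branch of \algorithmref{alg:le}; your second paragraph in fact supplies the justification for $commit\_head \longrightarrow B$ (via LBR Agreement plus round-number ordering) that the paper simply asserts, and your non-emptiness check on $leader\_candidates$ is a worthwhile addition. The one slip is in the last step: from the contrapositive of the Reputation property you get only that $\ell'$ invoked $LBR$ in round $r$, i.e.\ that $\ell'$ did not crash in any round $<r$; under the paper's definition this makes $\ell'$ alive at round $r-1$, not at round $r$. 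To conclude that $\ell'$ is alive at round $r$ you must also invoke the lemma's hypothesis that no party crashes in round $r$ — the paper does exactly this (``Since no party crashes in round $r$, $B$'s endorsers are all alive in round $r$''), whereas your write-up states that hypothesis but never uses it. The fix is one line, but as written the aliveness claim does not follow from what you cite.
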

\begin{proof}

By~\lemmaref{lemma:reflecting_blocking}, $k$ honest parties return from $LBR(r,\ell)$ with a certified block $B$ with round number $r$ authored by $\ell$.
Then, since $commit\_head \longrightarrow B$, they all commit $B$ at~\lineref{l:commit} of round $r+1$.
By the LBR Reputation property, the set of $B$'s endorsers does not include parties that crashed in rounds $<r$. Since no party crashes in round $r$, $B$'s endorsers are all alive in round $r$. 
Since these $2f+1$ parties each committed block $B$ with round number $r$, in $\code{choose\_leader}$ in~\algorithmref{alg:smr}, they all use the reputation scheme (\lineref{l:return_rep}) to choose the round $r+1$ leader, that we showed is alive at round $r$.
\end{proof}

Next, we utilize the latter to prove that in a round with no crashes, it is impossible for a minority of honest parties to return with a certified block from an LBR instance. Namely, either no honest party returns a block, or at least $2f+1$ of them do.

\begin{lemma}
\label{lemma:all_or_none}

In a crash-only execution, let $r$ be a round after GST in which no party crashes. If one honest party returns from $LBR$ with a certified block $B$ with round number $r$, then $2f+1$ honest parties return with $B$.

\end{lemma}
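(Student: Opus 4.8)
The plan is to derive the hypotheses of the crash-only progress lemma (\lemmaref{lemma:reflecting_blocking}) from the single honest return, and then let that lemma do the work. Let $q$ be the honest party that returns a certified block $B$ of round number $r$, and let $\ell$ be $B$'s author. By the Endorsement property, $B$ carries $2f+1$ distinct endorsers, and the first task is to pin down who they are and how they behaved. Since non-Byzantine parties endorse a round-$r$ block authored by $\ell$ only by calling $LBR(r,\ell)$, and since (by the Reputation property) no party that skipped its round-$r$ invocation can appear among $B$'s endorsers, all $2f+1$ endorsers invoked $LBR(r,\ell)$. Because no party crashes in round $r$, every one of them is alive at round $r$. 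Likewise, applying the Blocking property to $q$'s own $LBR(r,\ell)$ invocation, the author $\ell$ must itself have invoked $LBR(r,\ell)$ and is therefore alive at round $r$ as well.

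Next I would argue that these $2f+1$ invocations are synchronized, so that the progress lemma becomes applicable. Since round $r$ is after GST, the Pacemaker guarantees that all \code{new\_round}(r) notifications occur within $\delta$ of one another and that no party leaves round $r$ for $\Delta_p=\Delta_l$ afterwards; hence the $2f+1$ endorsers' $LBR(r,\ell)$ invocations share a common execution window of length $\geq c\delta$, i.e. they are LBR-synchronized($\ell$). With $\ell$ alive at round $r$ and $k=2f+1$ such invocations, \lemmaref{lemma:reflecting_blocking} applies and yields that all $2f+1$ of them return a certified block of round number $r$ authored by $\ell$. Finally, by the Agreement property of LBR any two certified blocks returned to (non-crashed) parties are comparable via $\longrightarrow$; two distinct certified round-$r$ blocks cannot be comparable, so every such return is exactly $B$, giving the desired $2f+1$ parties returning $B$.

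The step I expect to be the main obstacle is the honest-versus-alive bookkeeping. The $2f+1$ endorsers are only guaranteed to be \emph{alive} at round $r$ (up to $f$ of them may crash in a later round), whereas \lemmaref{lemma:reflecting_blocking} and the notion of LBR-synchronized are phrased in terms of \emph{honest} parties. I would handle this exactly as \lemmaref{lemma:reflecting_blocking} itself is handled: by the indistinguishability argument that, in a crash-only execution, an alive-but-later-crashing party is indistinguishable from an honest one up to and including round $r$ (in which no party crashes), so the progress guarantee transfers verbatim to the $2f+1$ alive invocations and the conclusion is read in the crash-only sense. The only remaining care is the timing inequality needed to certify LBR-synchronization from the Pacemaker bounds, which is routine given $\Delta_p=\Delta_l$.
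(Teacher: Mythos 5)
Your proof is correct and follows essentially the same route as the paper's: use Blocking to show $\ell$ invoked $LBR(r,\ell)$, Endorsement to obtain $2f+1$ invokers, the Pacemaker to establish LBR-synchronization, and then invoke \lemmaref{lemma:reflecting_blocking}. Your added care about the honest-versus-alive distinction and the final appeal to LBR Agreement to identify the returned block with $B$ are refinements the paper glosses over, but they do not change the argument.
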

\begin{proof}

Assume an honest party returns a certified block $B$ with round number $r$ after invoking $LBR(r,\ell)$.
By the LBR Blocking property, $\ell$ itself must have invoked $LBR(r,\ell)$ and by assumption it was \emph{alive} at round $r$.
By the LBR Endorsement property, the set of endorsing parties of $B$ contains $2f+1$ parties. Since we consider a crash-only execution, it follows by assumption that $2f+1$ party called $LBR(r,\ell)$. Due to the use of Pacemaker, these calls are LBR-synchronized($\ell$) invocations.
Finally, by~\lemmaref{lemma:reflecting_blocking} all these calls return a certified $B$ with round number $r$ authored by $\ell$.

\end{proof}

We prove that in a window of $f+2$ rounds without crashes, there must be a round with the sufficient conditions for a block to be committed for that round.

\begin{lemma}
\label{lemma:good_round}
In a crash-only execution, let $R$ be a round after GST such that no party crashes between rounds $R$ and $R+f+2$ (including).
There exists a round $R\leq r\leq R+f+2$ for which there are $2f+1$ LBR-synchronized($\ell$) invocations with a leader $\ell$ that is \emph{alive} at round $r$.
\end{lemma}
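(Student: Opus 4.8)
The plan is to exhibit a single ``good'' round inside the window by a dichotomy driven by the all-or-none commit behaviour of \lemmaref{lemma:all_or_none}. Call a round $s$ \emph{productive} if some honest party ever returns a certified block with round number $s$ from an LBR invocation. Since no party crashes in $[R,R+f+2]$ and all these rounds are after GST, \lemmaref{lemma:all_or_none} upgrades any productive round $s$: at least $2f+1$ honest parties return the same round-$s$ block $B$ from their $LBR(s,\ell)$ call, and each such party commits $B$ in its round-$s$ iteration, because its $commit\_head$ was last set (\lineref{l:update_head}) to a block of round number $<s$, so $commit\_head \longrightarrow B$. I split the proof on whether some round $s\in[R,R+f+1]$ is productive.

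In the productive case, the $\ge 2f+1$ parties that commit $B$ enter round $s+1\le R+f+2$ with $commit\_head=B$ of round number $s=(s+1)-1$, so all of them take the reputation branch (\lineref{l:return_rep}). Sharing the same $commit\_head$, the deterministic computation from \lineref{l:set_active} through \lineref{l:return_rep} yields the identical candidate $\ell'$; moreover $\ell'$ lies in $B$'s endorser set, which by the LBR Reputation property consists only of parties that invoked LBR in round $s\ge R$ and hence, as no party crashes in the window, are still alive in round $s+1$. Thus round $s+1$ carries $\ge 2f+1$ invocations $LBR(s+1,\ell')$ with $\ell'$ alive, and being after GST these are LBR-synchronized by the Pacemaker; this is exactly the claim, mirroring the reasoning of \lemmaref{lemma:alive_leader}.

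Otherwise every round in $[R,R+f+1]$ is unproductive, so no honest party ever returns a certified block of round number $r-1$ for any $r\in[R+1,R+f+2]$; consequently no alive party can have $commit\_head.round\_number=r-1$ at the start of such a round $r$, and every alive party takes the round-robin fallback (\lineref{l:return_rr}), invoking $LBR(r, r\bmod n)$. There are $\ge n-f\ge 2f+1$ such parties, and after GST their invocations are LBR-synchronized by the Pacemaker. It then remains to find one round in $[R+1,R+f+2]$ whose round-robin leader is alive: over these $f+2$ consecutive rounds the leaders $r\bmod n$ cover $\min(f+2,n)>f$ distinct parties, which cannot all lie in the at-most-$f$ crashed set, so some round $r$ has an alive round-robin leader and is good.

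I expect the main obstacle to be the reputation/round-robin divergence: in principle alive parties could split between the two branches and never reach a $2f+1$ quorum on any single leader. The argument neutralizes this precisely through \lemmaref{lemma:all_or_none}, which forbids a small honest minority from committing a round's block---either $\ge 2f+1$ parties commit and then unanimously take the reputation branch, or none do and all fall back to the common round-robin leader. The secondary care is establishing aliveness of the elected leader in each branch: for reputation this rests on the LBR Reputation property (endorsers were active, hence alive), while for the fallback it reduces to the elementary counting over the round-robin schedule above.
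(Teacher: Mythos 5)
Your proof is correct and rests on the same ingredients as the paper's own argument---\lemmaref{lemma:all_or_none} to force $2f+1$-agreement on the next leader, the Blocking/Reputation properties for aliveness, and a pigeonhole over $f+2$ distinct round-robin leaders---merely reorganized as a global dichotomy (some productive round in the window vs.\ none) where the paper instead iterates round by round from $R$, falling through the round-robin sequence each time the agreed leader turns out to have crashed. One small tightening: define ``productive'' as returning a certified round-$s$ block from the round-$s$ invocation itself (which is what \lemmaref{lemma:all_or_none} and the $commit\_head.round\_number = r-1$ check actually concern), rather than ``ever'' returning such a block from some later invocation.
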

\begin{proof}
First, let us consider the $LBR$ invocations for round $R$.
By~\lemmaref{lemma:all_or_none}, if one honest party returns with a block $B$ with round number $R$, then $2f+1$ honest parties return with $B$, commit it and update $commit\_head$ accordingly (\lineref{l:update_head}).
In this case, there are $2f+1$ $\code{choose\_leader}(R+1,B)$ invocations, which all return at~\lineref{l:return_rep}.
Otherwise, no party return a block with round number $R$, and thus they all return at~\lineref{l:return_rr}.
By the code and since a block implies a unique chain, in both cases $2f+1$ honest parties return the same leader $\ell$ in $\code{choose\_leader}(R+1,B)$ (either by reputation or round-robin). 
By the Pacemaker guarantees and since $R+1$ occurs after GST, there are at least $2f+1$ LBR-synchronized($\ell$) invocations.
If $\ell$ is alive at round $R+1$, we are done.
Otherwise, $\ell$ must have been crashed before round $R$ by the alive definition and lemma assumptions.
Thus, by the LBR Blocking property no honest party commits a block for round $R$ and they all choose the same leader for the following round at~\lineref{l:return_rr}.
The lemma follows by applying the above argument for $R+f+2 - R+1 = f+1$ rounds. 

\end{proof}

Finally, we bound by $O(f^2)$ the total number of rounds in a crash-only execution for which no honest party commits a block:

\begin{lemma}
Consider a crash-only execution. After GST, the number of rounds $r$ for which no honest party commits a block formed in $r$ is bounded by $O(f^2)$. 
\end{lemma}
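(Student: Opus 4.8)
The plan is to combine \lemmaref{lemma:good_round} with a \emph{chaining} argument showing that, inside any maximal block of consecutive crash-free rounds, a single good round propagates goodness to every later round of that block. Throughout, call a round $r$ \emph{ready} if it satisfies the preconditions of \lemmaref{lemma:reflecting_blocking} (there are $2f+1$ LBR-synchronized($\ell$) invocations with $\ell$ alive at $r$), and call $r$ \emph{good} if some honest party commits a block formed in $r$; the quantity to bound is exactly the number of rounds that are not good.

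First I would establish the chaining step: if $r$ is a ready round after GST and no party crashes in round $r$, then $r$ is good and $r+1$ is again ready. Indeed, by \lemmaref{lemma:alive_leader} the preconditions together with ``no crash in $r$'' yield $2f+1$ honest parties that commit a block formed in $r$ (so $r$ is good) and that all return the same leader $\ell'$ at~\lineref{l:elect} of round $r+1$, with $\ell'$ alive at $r$. Since no party crashes in round $r$, $\ell'$ is still alive at round $r+1$; and since $r+1$ is after GST, the Pacemaker guarantees (exactly as in the proof of \lemmaref{lemma:good_round}) turn these $2f+1$ matching elections into $2f+1$ LBR-synchronized($\ell'$) invocations at round $r+1$. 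Hence $r+1$ is ready. Iterating, once a ready round occurs inside a crash-free stretch, every subsequent round of that stretch is good.

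Next I would partition the post-GST rounds by crash events. Since at most $f$ parties crash, at most $f$ rounds contain a crash; removing them splits the post-GST timeline into at most $f+1$ maximal crash-free intervals. Fix such an interval with first round $R$. If it has fewer than $f+3$ rounds it contributes at most $f+2$ non-good rounds trivially. Otherwise \lemmaref{lemma:good_round} applied at $R$ yields a ready round $r_0 \in [R, R+f+2]$, and by the chaining step every round from $r_0$ to the end of the interval is good, so the only possibly non-good rounds of the interval are $R,\dots,r_0-1$, again at most $f+2$ of them. Crucially, this also covers the final, unbounded interval following the last crash, which is what makes the total count finite rather than merely bounded in density. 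Summing, the non-good rounds are at most $(f+1)(f+2)$ from the crash-free intervals plus at most $f$ crash rounds, giving $O(f^2)$.

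I expect the main obstacle to be the chaining step rather than the counting. \lemmaref{lemma:good_round} alone only guarantees one ready round per window of $f+3$ rounds, which bounds the \emph{density} but not the \emph{total} number of non-good rounds; an unbounded crash-free suffix could otherwise contain unboundedly many non-good rounds. Establishing that goodness self-perpetuates across an entire crash-free interval (and in particular forever after the last crash) is therefore the crux. The delicate point within it is the bookkeeping around the alive/crash definitions: promoting ``$\ell'$ alive at round $r$'' to ``$\ell'$ alive at round $r+1$'' using the absence of crashes in round $r$, so that the preconditions of \lemmaref{lemma:reflecting_blocking} genuinely reapply at $r+1$ and the induction closes.
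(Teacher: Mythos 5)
Your proof is correct and follows essentially the same route as the paper's: partition the post-GST rounds by the at most $f$ crash rounds, use \lemmaref{lemma:good_round} to locate a ready round near the start of each crash-free interval, and chain \lemmaref{lemma:alive_leader} forward so that commits self-perpetuate to the end of the interval (including the unbounded final one), giving $O(f)$ skipped rounds per interval and $O(f^2)$ in total. One small bookkeeping slip in your chaining step: promoting ``$\ell'$ alive at round $r$'' to ``$\ell'$ alive at round $r+1$'' requires that no party crashes in round $r+1$ (not round $r$) under the paper's definition of crashing; your crash-free-interval framing supplies this, but the step as you state it, with hypothesis only on round $r$, does not.
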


\begin{proof}

Consider a crash-only execution and let $R_1,R_2,\dots R_k$ the rounds after GST in which parties crash ($k\leq f$).
For ease of presentation we call a round for which no honest party commits a block formed in $r$ a \emph{skipped} round.
We prove that the number of skipped rounds between $R_i$ and $R_{i+1}$ for $1\leq i <k$ is bounded.
If $R_{i+1}-R_i<f+4$, then there are at most $f+4$ rounds and hence at most $f+4$ skipped rounds.
Otherwise, we show that at most $f+2$ rounds are skipped between rounds $R_i$ and $R_{i+1}$.

First, by~\lemmaref{lemma:good_round}, there exists a round $R_i<R_i+1\leq r\leq R_i+1+f+2<R_{i+1}$ for which there are $2f+1$ LBR-synchronized($\ell$) invocations with a leader $\ell$ that is \emph{alive} at round $r$.
By~\lemmaref{lemma:alive_leader}, since no party crashes in round $r$, $2f+1$ honest parties return the same leader $\ell'$ at~\lineref{l:elect} of round $r+1$ and $\ell'$ is alive at round $r$. 
Since no party crashes at round $r+1$ as well (because $R_{i+1}-R_i\geq f+4$), $\ell'$ is alive at round $r+1$. 
By the Pacemaker guarantees and since we consider rounds after GST, we conclude that there are at least $2f+1$ \emph{LBR-synchronized($\ell'$) invocations} for round $r+1$.
By~\lemmaref{lemma:alive_leader} applied again for round $r+1$, $2f+1$ honest parties commit a block for round $r+1$. Thus, round $r+1$ is not \emph{skipped}.
We repeat the same arguments until round $R_{i+1}$, and conclude that in each of these rounds a block is committed. Hence, the rounds that can possibly be skipped between $R_i$ and $R_{i+1}$ are $R_i \leq r' < r$.
Thus there are $O(f)$ skipped round between $R_i$ and $R_{i+1}$.
For $R_k$ we use similar arguments but since no party crashes after $R_k$, we apply~\lemmaref{lemma:alive_leader} indefinitely. We similarly conclude that there are $O(f)$ skipped rounds after $R_k$.
All in all, since $k \leq f$, we get $O(f^2)$ skipped rounds.

\end{proof}

We immediately conclude the following:

\begin{corollary}
\label{lemma:leaderutilization}
\algorithmref{alg:smr} with~\algorithmref{alg:le} satisfies Leader-utilization.
\end{corollary}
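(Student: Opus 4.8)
The plan is to observe that this Corollary is an immediate restatement of the preceding (unlabeled) Lemma, now read through the lens of the \smr problem definition. First I would recall that Leader-utilization requires, in crash-only executions after GST, that the number of rounds $r$ for which no honest party commits a block formed in $r$ be bounded. The preceding Lemma establishes precisely that this count is bounded by $O(f^2)$. Since $O(f^2)$ is a finite bound for any fixed $f<n/3$, the two statements coincide and the Corollary follows directly.

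Consequently there is no genuine obstacle internal to the Corollary itself; all of the technical weight resides in the chain of lemmas that culminates in the $O(f^2)$ bound. The one point I would double-check is that the quantity bounded in the Lemma is literally the one named in the definition --- a \emph{skipped} round, i.e. a round $r$ for which no honest party commits a block formed in $r$ --- which it is by the Lemma's own terminology, so no reconciliation of definitions is needed.

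If I wanted to expose the structure behind the bound rather than merely cite it, I would unwind the argument one level: partition the post-GST timeline at the (at most $f$) crash rounds $R_1,\dots,R_k$, use \lemmaref{lemma:good_round} to locate within each crash-free window of length $f+2$ a round satisfying the adapted Progress preconditions, then invoke \lemmaref{lemma:alive_leader} repeatedly to show that every subsequent crash-free round commits a block. This yields $O(f)$ skipped rounds per segment and hence $O(f^2)$ overall. For the Corollary as stated, however, a single-line appeal to the preceding Lemma suffices, and I would phrase the proof as exactly that.
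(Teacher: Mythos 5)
Your proposal matches the paper exactly: the corollary is stated as an immediate consequence of the preceding $O(f^2)$ lemma, whose bounded count is precisely the quantity named in the Leader-utilization definition. Your optional unwinding of the lemma's proof (partitioning at crash rounds, then applying \lemmaref{lemma:good_round} and \lemmaref{lemma:alive_leader}) also mirrors the paper's own argument for that lemma, so nothing is missing.
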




\subsubsection{Chain-Quality.}
For the purposes of the Chain-quality proof, we say that a block is committed when some honest party commits it. We say that a block $B$ with round number $r$ is \emph{immediately committed} if an honest party commits $B$ in round $r$.
When we refer to a leader elected in of~\algorithmref{alg:le} from the round-robin mechanism we mean~\lineref{l:return_rr}, and
when we refer to a leader elected from the reputation mechanism, we mean~\lineref{l:return_rep}.

We begin by showing that each round assigned with an honest round-robin leader implies a committed block in that round or the one that precedes it (not necessarily an honest block).

\begin{lemma}
\label{lemma:some_commit_per_correct}
Let $r$ be a round after GST such that $p_i =( r \mod n)$ is honest.
Then, either Byzantine block with round number $r-1$ or an honest block with round number $r-1$ or $r$ is immediately committed.
\end{lemma}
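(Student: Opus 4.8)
The plan is to split on \emph{which} branch of \code{choose\_leader} the honest parties take at round $r$: the reputation branch (\lineref{l:return_rep}) or the round-robin fallback (\lineref{l:return_rr}). Since every honest party executes exactly one of them, precisely one of the following holds: (A) at least one honest party reaches \lineref{l:return_rep} at round $r$, or (B) every honest party reaches \lineref{l:return_rr} at round $r$. I would treat the two cases separately, aiming to produce an immediately committed block of round number $r-1$ in case (A) and an immediately committed honest block of round number $r$ in case (B); together these exhaust the three disjuncts of the statement.

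For case (A), suppose honest party $p_j$ takes the reputation branch. Then it passed the guard with $commit\_head.round\_number = r-1$, so at the start of round $r$ its committed head $B$ has round number exactly $r-1$. I would then argue that $B$ was necessarily committed by $p_j$ \emph{in} round $r-1$: a block of round number $r-1$ cannot be returned by any $LBR$ invocation before round $r-1$, and $commit\_head$ is updated only upon a commit (\lineref{l:update_head}), so the only round $\le r-1$ at which $p_j$ could have committed $B$ is $r-1$ itself. Hence $B$ is immediately committed, and since it is either honest or Byzantine it satisfies the first or second disjunct.

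Case (B) is where the real work lies. Here all honest parties take the fallback and thus all return the same leader $(r \bmod n) = p_i$, which is honest by assumption; since $f < n/3$ there are at least $2f+1$ such parties, and since round $r$ is after GST the Pacemaker guarantees their $LBR(r,p_i)$ calls are LBR-synchronized($p_i$) (invoked exactly as in the proof of \lemmaref{lemma:good_round}). The LBR Progress property then forces all of them to return a certified block $B$ of round number $r$ authored by the honest $p_i$, i.e.\ an honest block. The main obstacle is to show they actually \emph{commit} $B$, i.e.\ that $commit\_head \longrightarrow B$ on \lineref{l:commit}. I would close this via LBR Agreement: both $commit\_head$ (a previously committed, hence certified, $LBR$ output) and $B$ are certified blocks returned from $LBR$ to honest parties, so one extends the other; since $commit\_head$ was committed before round $r$ its round number is smaller than $r$, the round number of $B$, which rules out $B \longrightarrow commit\_head$ and leaves $commit\_head \longrightarrow B$. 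Thus at least one honest party commits $B$ in round $r$, making $B$ an immediately committed honest block of round number $r$ and satisfying the third disjunct.
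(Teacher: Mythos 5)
Your proof is correct and follows essentially the same route as the paper's: your case split on which branch of \code{choose\_leader} is taken is equivalent to the paper's split on whether a block with round number $r-1$ is immediately committed (since a $commit\_head$ with round number $r-1$ at the start of round $r$ can only arise from a commit in round $r-1$ itself), and the main case is closed identically via the round-robin fallback, the Pacemaker, and LBR Progress. You are in fact slightly more careful than the paper in justifying the final commit step ($commit\_head \longrightarrow B$) via LBR Agreement, which the paper leaves implicit.
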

\begin{proof}
If a block is immediately committed with round number $r-1$ then we are done.
Otherwise, no honest party commits a block with round number $r-1$ in round $r-1$, and they all elect the round $r$ leader $\ell$ using the round-robin mechanism.
By the assumption, $\ell$ is honest.

By the Pacemaker, all honest invocations of $LBR(r,\ell)$ in~\lineref{l:lbr} are LBR-synchronized($\ell$).
Since there are at least $2f+1$ honest parties, by the LBR Progress property, all honest invocations return the same certified block $B$ with round number $r$ authored by $\ell$.
Then, the honest parties commit $B$ at~\lineref{l:commit}.
\end{proof}




If there are two consecutive rounds assigned with honest round-robin leaders and in addition the last $f$ committed blocks are Byzantine, then an honest block follows, as proven in the following lemma.

\begin{lemma}
\label{lemma:two_consecutive}
Let $r'$ be a round after GST such that $p_i = (r' \mod n)$ and $p_j = (r'+1 \mod n)$ are honest.
Suppose $f$ blocks with round numbers in $[r, r')$ with different Byzantine authors are committed. 
For a block $B$ with round number $r'$ or $r'+1$ that is immediately committed, there is an honest block with round number $[r, r'+1]$
on $B$'s implied chain.

\end{lemma}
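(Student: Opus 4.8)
The plan is to case on the author of $B$, and when $B$ is Byzantine, on whether its round number is $r'$ or $r'+1$; the whole argument turns on the reputation exclusion at \lineref{l:set_optional}. If $B$ is honest we are immediately done, since a block lies on its own implied chain and its round number ($r'$ or $r'+1$) lies in $[r,r'+1]$ (note $r<r'$, as $[r,r')$ must hold the $f$ Byzantine blocks). So assume $B$ is Byzantine. The starting observation is that since $B$ is committed it is certified, so by LBR Endorsement it carries $2f+1$ endorsers, at least $f+1$ of them honest; by the model assumption each such honest endorser invoked $LBR(\rho,B.author)$ for $B$'s round $\rho\in\{r',r'+1\}$, i.e.\ each computed $B.author$ as the round-$\rho$ leader inside \code{choose\_leader}. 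Since $B.author$ is Byzantine while the round-robin leader of round $r'$ (resp.\ $r'+1$) is the honest $p_i$ (resp.\ $p_j$), these endorsers did not return at \lineref{l:return_rr}; they all took the reputation branch \lineref{l:return_rep}, which requires a $commit\_head$ of round number $\rho-1$.

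Next I would pin down a single common predecessor. Every honest endorser that used reputation committed its $commit\_head$, a block of round number $\rho-1$; by SMR Agreement any two committed blocks are comparable, and two comparable certified blocks of equal round number must coincide (round numbers strictly increase along a chain). Hence all these endorsers share one committed block $H$ of round $\rho-1$, evaluate \code{choose\_leader} from the same $H$, and deterministically pick the same candidate, so $B.author$ is precisely the reputation leader derived from $H$. Moreover $B$ and $H$ are both committed, hence comparable, and as $B$'s round exceeds $H$'s we get $H\longrightarrow B$; thus every block on $H$'s implied chain also lies on $B$'s.

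With $H$ fixed I would run the dichotomy on the window $W=[r,\rho-1]$ (so $W=[r,r'-1]$ when $\rho=r'$ and $W=[r,r']$ when $\rho=r'+1$, in both cases $W\subseteq[r,r'+1]$). By SMR Agreement the $f$ committed Byzantine blocks of $[r,r')$ lie on $H$'s chain. If some block on $H$'s chain with round number in $W$ is honest, it is the honest block we seek: it lies on $B$'s chain and in $[r,r'+1]$, and we are done. Otherwise every block on $H$'s chain with round number in $W$ is Byzantine. Now I trace the $last\_authors$ loop from $H$ downward: every author it collects is Byzantine, and because the $f$ committed Byzantine blocks of $[r,r')$ have $f$ \emph{distinct} authors while there are at most $f$ Byzantine parties in total, these $f$ authors are \emph{all} the Byzantine parties and are reached before genesis. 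Hence the loop fills $last\_authors$ with exactly the $f$ Byzantine parties, so $leader\_candidates=active\setminus last\_authors$ at \lineref{l:set_optional} contains only honest parties and is nonempty (as $|active|=2f+1$ so $|active\setminus last\_authors|\ge f+1$). Then $B.author$ would be honest, contradicting that $B$ is Byzantine; so this case cannot occur and the honest block must exist.

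The step I expect to be most delicate is the common-predecessor claim together with the bookkeeping of $last\_authors$: I must argue that all honest endorsers taking the reputation branch genuinely share the same $H$ (so that $B.author$ is well defined as a deterministic function of $H$), and then that the loop's set-valued accumulation, running over possibly repeated-author and round-skipping blocks, collects precisely the $f$ distinct Byzantine authors --- no fewer, so that no Byzantine party survives into $leader\_candidates$, and terminating before genesis. The honest-block-versus-all-Byzantine dichotomy and the counting $|active\setminus last\_authors|\ge f+1$ are then routine.
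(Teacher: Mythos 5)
Your proposal is correct and follows essentially the same route as the paper's proof: dispose of the round-robin/honest-author case immediately, then in the reputation case use Agreement to place the $f$ distinct-author Byzantine blocks on the committed prefix and run the dichotomy on the $last\_authors$ exclusion (either an honest block appears in the window, or all $f$ Byzantine parties are excluded and the reputation leader is forced to be honest). Your write-up merely makes explicit some steps the paper leaves terse, such as pinning down the common $commit\_head$ $H$ shared by the honest endorsers and the counting $|active\setminus last\_authors|\ge f+1$.
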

\begin{proof}
By the LBR endorsement assumption and property, the author of block $B$ should be either a reputation-based, or a round-robin leader of round $r'$ or $r'+1$.
If it is a round-robin leader, then by the lemma assumption, the leader is honest and since $B$ is the head of its implied chain, the proof is complete.
Thus, in the following we assume that $B$'s author is a reputation-based leader.
By the SMR Agreement property and the lemma assumption, $B$'s implied chain contains $f$ blocks with different Byzantine authors and rounds numbers in $[r, r')$.
By the code of the reputation-based mechanism, either all $f$ byzantine authors are excluded from the \emph{leader\_candidates} which implies that $B$ has an honest author, or that there is an honest block with round number in $[r,r')$ on $B$'s implied chain.

\end{proof}

Lastly, the following lemma proves that in any window of $5f+2$ rounds an honest block is committed.

\begin{lemma}
\label{lemma:honestcommit}
Let $r$ be a round after GST.
At least one honest block is committed with a round number in $[r, r+5f+2]$.
\end{lemma}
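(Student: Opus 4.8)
The plan is to argue by contradiction: assume that no honest block is committed with round number in $[r,r+5f+2]$, and then exhibit a configuration matching the hypotheses of \lemmaref{lemma:two_consecutive}, whose conclusion (an honest committed block with round number in the window) contradicts the assumption. First I would record two elementary round-robin counting facts that follow from $f<n/3$: since at most $f$ parties are Byzantine, among the round-robin leaders of any $f+1$ consecutive rounds at least one is honest; and among any $2f+2$ consecutive rounds (whose leaders are $2f+2\le n$ distinct parties, hence at least $f+2$ honest and at most $f$ Byzantine) two \emph{consecutive} rounds must have honest round-robin leaders, because $f+2$ honest positions cannot be separated pairwise by only $f$ Byzantine ones. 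Under the contradiction hypothesis, \lemmaref{lemma:some_commit_per_correct} upgrades every round $s$ with an honest round-robin leader into an \emph{immediately committed Byzantine block} of round $s-1$ (the ``honest block'' alternative is excluded), so commits are forced frequently throughout the window.

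The core of the argument is to accumulate $f$ committed blocks with pairwise distinct Byzantine authors, all at round numbers lying in an initial segment $[r,r')$ of the window. Here I would invoke the reputation-exclusion mechanism of~\algorithmref{alg:le}: a reputation-chosen leader is never among the $f$ most recent distinct authors on the committed chain. Under the no-honest-commit hypothesis, any reputation-committed block in the window has a Byzantine author, and while fewer than $f$ distinct Byzantine authors have so far appeared inside the window, those authors are exactly the most recent distinct ones and hence all excluded; therefore the next reputation commit must introduce a \emph{fresh} Byzantine author, strictly increasing the count. Combining this monotone growth with the forced commits produced by the (abundant) honest round-robin rounds, I would show that within roughly $3f$ rounds the count reaches $f$, yielding $f$ committed blocks with distinct Byzantine authors at round numbers in $[r,r')$ with $r'\le r+3f$ (or else an honest block appears directly, contradiction).

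Finally, in the remaining budget I would locate two consecutive honest round-robin rounds $r',r'+1$ beyond the accumulation segment but still inside the window: the interval $[r+3f,\,r+5f+2]$ has at least $2f+2$ rounds, so the second counting fact supplies such a pair with $r'+1\le r+5f+2$. Applying \lemmaref{lemma:some_commit_per_correct} to round $r'+1$ forces an immediately committed Byzantine block $B$ of round $r'$. Now all hypotheses of \lemmaref{lemma:two_consecutive} hold—honest round-robin leaders at $r'$ and $r'+1$, $f$ committed distinct-Byzantine-author blocks in $[r,r')$, and the committed $B$ of round $r'$—so the lemma produces an honest block with round number in $[r,r'+1]\subseteq[r,r+5f+2]$, contradicting the assumption.

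The main obstacle is the accumulation step together with the tight bookkeeping that yields the exact constant $5f+2$. Two points need care: first, ensuring that the $f$ distinct Byzantine authors are witnessed by committed blocks whose round numbers genuinely lie at or above $r$ (rather than in the pre-window history reached by walking the parent chain), so that the honest block forced at the end also lands inside $[r,r+5f+2]$; and second, reconciling the two commit modes, since forced commits arising from honest round-robin rounds may themselves be round-robin (at Byzantine positions) rather than reputation-based, and only the reputation-based commits are guaranteed to introduce fresh authors. Partitioning the $5f+2$ budget as roughly $3f$ rounds for accumulation and $2f+2$ rounds to secure the final pair of consecutive honest leaders is the delicate part that makes the constant come out exactly.
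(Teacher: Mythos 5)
Your overall strategy coincides with the paper's: argue by contradiction, count honest round-robin positions, use \lemmaref{lemma:some_commit_per_correct} to force immediately committed Byzantine blocks, accumulate $f$ of them with pairwise distinct authors in roughly the first $3f$ rounds, and close with a pair of consecutive honest round-robin rounds and \lemmaref{lemma:two_consecutive}. Your final step, the $2f{+}2$ counting fact for adjacent honest positions, and the exclusion argument showing that a reputation-committed Byzantine block must carry a fresh author are all correct and match the paper.

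The one genuine gap is the accumulation step, and it is exactly the point you flag as ``delicate'' without resolving. A round $s$ with an honest round-robin leader forces (under the contradiction hypothesis) an immediately committed Byzantine block at round $s-1$; but if $(s-1) \bmod n$ happens to be Byzantine, that block may have been certified with its author chosen by the round-robin fallback rather than by reputation. Such a block is not subject to the $f$-author exclusion and can repeat an author already counted --- for instance, a party that was reputation-elected earlier in the window and is also the round-robin leader of a later round $s-1$. So ``monotone growth from forced commits at single honest rounds'' does not go through as stated. The paper closes this by demanding \emph{pairs} in the accumulation phase as well: in $[r, r+3f+1)$ there are at least $f$ rounds $r'$ such that both $r'-1$ and $r'$ have honest round-robin leaders (among $3f+1$ consecutive, pairwise distinct positions, each of the at most $f$ Byzantine ones destroys at most two of the $3f$ adjacent pairs). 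For such $r'$ the forced Byzantine block at round $r'-1$ cannot be a round-robin block, since that round's round-robin leader is honest; hence it is reputation-committed, and your own exclusion argument yields freshness. This costs $3f+1$ rounds, consistent with your budget, and the remainder of your proof then goes through unchanged.
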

\begin{proof}
Suppose for contradiction that no honest block with round number in $[r, r+5f+2]$ is committed.
There are at least $f$ rounds $r'$ in $[r, r+3f+1)$, such that rounds $r'-1$ and $r'$ are allocated an honest leader by the round-robin mechanism.
By~\lemmaref{lemma:some_commit_per_correct}, a block with round number $r'-1$ or $r'$ is immediately committed.
Due to~\lemmaref{lemma:some_commit_per_correct} and the contradiction assumption, for any such round $r'$, a Byzantine block with round number $r'-1$ is immediately committed.
Since $r'-1$ has an honest round-robin leader, the block must be committed from the reputation mechanism.

It follows that $f$ Byzantine blocks with round numbers in $[r, r+3f+1)$ are immediately committed from the reputation mechanism, and consequently, they all must have different authors.
Note that there exists $r' \in [r+3f+1,r+5f+2)$ (in a window of $2f+1$ rounds), such that the round-robin mechanism allocates honest leaders to rounds $r'$ and $r'+1$.
By~\lemmaref{lemma:some_commit_per_correct}, a block $B$ with round number $r'$ or $r'+1$ is immediately committed.~\lemmaref{lemma:two_consecutive} concludes the proof. 
\end{proof}

We conclude the following:

\begin{corollary}
\label{lemma:chainquality}
\algorithmref{alg:smr} with~\algorithmref{alg:le} satisfies Chain-quality and Liveness.
\end{corollary}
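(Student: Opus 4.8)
The plan is to derive both properties directly from \lemmaref{lemma:honestcommit}, which guarantees that after GST at least one honest block is committed with round number in every window $[r, r+5f+2]$. Liveness is the easier of the two: I would apply \lemmaref{lemma:honestcommit} repeatedly to a sequence of pairwise-disjoint such windows, each after GST, obtaining a committed honest block in each window. Since blocks committed in disjoint round-windows carry distinct round numbers they are distinct blocks, and since by definition an honest block that is committed is committed by some honest party, this yields an unbounded number of blocks committed by honest parties, which is exactly Liveness.

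For Chain-quality, fix a block $B$ committed by an honest party, with round number $r_B$, and I would first argue that every honest block $H$ committed in a round $r' < r_B$ lies on $B$'s implied chain. This uses the SMR Agreement property: for such a committed honest block $H$, either $B \longrightarrow H$ or $H \longrightarrow B$; since round numbers strictly increase along parent links and $r' < r_B$, the case $B \longrightarrow H$ is impossible, so $H \longrightarrow B$, i.e. $H$ is on $B$'s implied chain. Thus all the honest blocks produced by \lemmaref{lemma:honestcommit} below round $r_B$ accumulate on $B$'s chain.

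Next I would count. Because round numbers along a chain are distinct and increasing, each round contributes at most one block to $B$'s implied chain, so the number of chain blocks with round numbers in any interval of length $L$ is at most $L$. On the other hand, partitioning the post-GST rounds up to $r_B$ into consecutive windows as in \lemmaref{lemma:honestcommit} and invoking the lemma on each window places at least one honest block on the chain per window. Hence, restricting to the chain blocks with post-GST round numbers, the fraction that are honest is at least $\frac{1}{5f+2}$, so the proportion of Byzantine blocks is at most $1 - \frac{1}{5f+2}$, a constant bounded away from $1$. The finite prefix of $B$'s chain formed before GST contributes a number of blocks independent of $r_B$ and therefore does not affect this bound for deep enough chains, whose Byzantine proportion is governed by the post-GST segment.

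The main obstacle I anticipate is the bookkeeping in the Chain-quality argument rather than any conceptual difficulty: cleanly tying the per-round counting (one chain block per round) to the per-window honest-block guarantee of \lemmaref{lemma:honestcommit}, and making precise in what sense the proportion is ``bounded'' given the unavoidable finite pre-GST prefix. Once the reduction to \lemmaref{lemma:honestcommit} and the Agreement-based containment of honest blocks on $B$'s chain are established, both conclusions follow mechanically.
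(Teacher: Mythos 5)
Your proposal is correct and follows essentially the same route as the paper, which states this corollary as an immediate consequence of \lemmaref{lemma:honestcommit} without spelling out the details. Your derivation (disjoint post-GST windows for Liveness; Agreement plus round-number monotonicity to place the per-window honest blocks on $B$'s implied chain, yielding a Byzantine proportion of at most $1-\frac{1}{5f+2}$ on the post-GST segment) is exactly the intended argument, and matches the paper's stated bound of one honest block per $O(f)$ committed blocks.
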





Taken jointly,~\Cref{lemma:leaderutilization}, \Cref{lemma:chainquality}, and the Agreement property proved in~\Cref{subsec:le} yield the
following theorem:

\begin{theorem}
\algorithmref{alg:smr} with~\algorithmref{alg:le} implements \smr.
\end{theorem}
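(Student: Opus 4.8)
The plan is to treat this as a capstone that merely collects the four obligations in the \smr definition and discharges each against a result already in hand, since all of the technical content has been front-loaded into the preceding lemmas. Concretely, I would verify Agreement, Liveness, Chain-quality, and Leader-utilization one at a time for the protocol obtained by plugging \algorithmref{alg:le} into \algorithmref{alg:smr}.

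For Agreement, I would invoke the remark in \sectionref{subsec:le}: the Agreement property of \algorithmref{alg:smr} follows directly from the LBR Agreement property and holds for \emph{any} \code{choose\_leader} implementation, so in particular for \carousel. Leader-utilization is exactly \corollaryref{lemma:leaderutilization}, which caps the number of skipped rounds after GST in crash-only executions at $O(f^2)$. Chain-quality and Liveness are both delivered by \corollaryref{lemma:chainquality}: Chain-quality from the $5f+2$-round window in which an honest block is always committed (\lemmaref{lemma:honestcommit}), and Liveness as an immediate consequence, since that same window guarantees an unbounded number of committed honest blocks. Assembling these four statements yields the theorem.

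The point I would stress is that the apparent triviality of the final step is exactly what the earlier lemmas paid for: the real obstacle is that \code{choose\_leader} is \emph{not} guaranteed to agree across honest parties, because the adversary can induce divergent commit views (by hiding and selectively revealing certified blocks) so that some honest parties take the reputation branch (\lineref{l:return_rep}) while others fall back to round-robin (\lineref{l:return_rr}). The theorem therefore cannot be read off from a generic liveness argument; it rests on the fact, established in the lemma chain from \lemmaref{lemma:reflecting_blocking} through \lemmaref{lemma:good_round}, that whenever no block was committed in the previous round all honest parties \emph{provably} coincide on the round-robin leader, so divergence cannot be sustained long enough to stall progress or to let the adversary pack the chain with Byzantine blocks. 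Given those lemmas, no further work remains, and the only care needed is to confirm that the crash-only restriction in Leader-utilization and the GST restrictions in both quality properties match the hypotheses of the corollaries being cited.
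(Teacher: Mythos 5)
Your proposal matches the paper's own argument: the theorem is proved by combining Corollary~\ref{lemma:leaderutilization} (Leader-utilization), Corollary~\ref{lemma:chainquality} (Chain-quality and Liveness), and the Agreement observation from Section~\ref{subsec:le}, exactly as you do. The decomposition, the citations, and the caveat that the substance lives in the preceding lemma chain are all consistent with the paper.
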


\com{
\subsection{lower bounds}
\label{subsec:lower}

The definition of a LBR captures the abstract guarantees that any LBR algorithm provides, even in the presence of a Byzantine adversary.
These guarantees are worst-case in nature and used as tools for proving the correctness properties of any algorithm that use LBR as a building block.
However, in order to reason about lower bounds of such algorithms, it is also required to define what is the adversary capable of while corrupting parties that participate in an LBR solution and how it can manipulate its usage.
That is, it is meaningful to assume that the preconditions of the LBR properties are not only sufficient, but also they are necessary.
I.e., if the preconditions are not met, then the implications do not hold.

Specifically, as the adversary may corrupt parties and have them follow any strategy it can lead to some undesired outcomes even after GST. One of these strategies enables the adversary to form a certified block in a round while hiding it from all correct parties. We refer to such a block as a \emph{hidden block}. Practically, this can be done by corrupting the leader of a round, having it act as if it is a correct one, but not disseminating the block once it is created. 


Formally:


\begin{assumption}
If before GST $f+1$ \rati{2f+1-t} correct parties invoke $LBR(r,l)$ with the same Byzantine party  $l$, then the adversary can form a \emph{certified} block $B_r$ and have all correct parties that called $LBR(r,l)$ return $\bot$.
\end{assumption}

========================================

Prior to presenting our Leader Election solution we discuss what information any Leader Election solution must consider. We show that electing a leader is a subtle task that involves two steps. First, we obtain a set of candidate parties. Intuitively, in order to provide leader utilization we want to eliminate crashed parties from the leaders pool, by electing only parties that actively participate in the SMR. However, this can lead to electing Byzantine active parties, allowing them to censor transactions.
For this reason we secondly aim to prioritize parties that have not been leaders for a while. Roughly speaking, we may not distinguish between correct parties and Byzantine ones and so we give the leadership opportunity to different parties to achieve chain quality.
\alberto{Should we say that giving up on chains-quality or leader utilization makes the problem easy but useless?}

\shir{i think this shouldn't be in the problem definition section}
\begin{claim}
The adversary can form certified blocks for an unbounded number of rounds before GST, such that no honest party is aware of any of these blocks (unbounded hidden certified blocks).
\end{claim}
\begin{proof}
Consider any execution $\sigma$ with all correct parties.
Starting from any time $t$ before GST, the adversary can perform the following procedure: make the execution $\sigma$ from point $t$ indistinguishable to all parties to an execution in which GST occurred at time $t$.
By~\propertyref{prop:create}, eventually there have to be $2f+1$ calls to $LBR$ that satisfy the requirements of the leader-based round progress property. However, since GST hasn't actually occurred in $\sigma$, these invocations satisfy~\propertyref{prop:noblock}. Hence, the adversary can ensure no certified block gets created as a result and that all invocations return $\bot$.

The adversary can repeat the above unboundedly many times before GST. In other words, for any bound $b$, 
there exists an execution $\sigma_b$ in which GST occurs after the adversary performs the above procedure $b$ times. 

Since there are only $3f+1$ parties, for any bound $b'$, we can find an execution $\sigma_b$ with $b\geq b'$, such that some party $\ell$ was the leader in the $LBR$ invocations at least $b'$ out of the $b$ times the adversary performed its procedure. Note that each time, all $LBR$ invocations are with the same leader, since we know they satisfy the $LBR$-progress requirement (other than before-GST).

There exists execution $\sigma'_{b'}$ in which the adversary corrupts only party $\ell$, such that $\sigma'_{b'}$ is indistinguishable from $\sigma_b$ for all correct (non-$\ell$) parties. However, by~\propertyref{prop:hidden}, controlling $\ell$ allows the adversary to actually form $b'$ hidden certified blocks in these rounds, while still returning $\bot$ to all $LBR$ invocations, keeping the executions indistinguishable. 
\end{proof}

\begin{claim}
\label{clm:unfixed}



\shir{new phrasing and proof}
\shir{the first sentence will be outside the claim, it's just here for now for binding}
We say that the leader of round $r$ is predefined if for any $M\neq M'$ it holds that $elect(r,M)=elect(r,M')$.
There is a bounded number of rounds with predefined leaders.

\end{claim}
\begin{proof}

Assume by way of contradiction that there exists some implementation $\mathcal{A}$ of $elect$ such that there are infinitely many rounds with predefined leaders. 
Since the set of parties in the system is finite, in any execution $\sigma$ there is at least one party $l$ such that $l$ is the predefined leader of infinitely many rounds in $\sigma$.
Let $\sigma$ be an execution in which the party $l$ crashes immediately at the beginning of the execution.
Notice that in rounds with crashed leader a block cannot be formed. Hence, there are infinitely many rounds in which a block cannot be formed. This is a contradiction to the Leader Utilization property.

\end{proof}

\begin{claim}
The output of $elect(r)$ executed by a correct party $i$ does not dependant only on the outputs of $LBR_i(r',\star)$ for $r'<r$.
\end{claim}
\begin{proof}
Assume by way of contradiction that the output of $elect(r)$ is a function of $LBR_i(r',\star)$ results for $r'<r$ only. We show that under this assumption, the adversary can arbitrary choose the output of elect for unboundedly many rounds, violating chain quality.
Note that LBR guarantee progress only after GST. That is, in an execution with Byzantine parties, each leader based round for rounds that occur before GST the only properties are unforgeability and uniqueness. 
\end{proof}

So maybe indeed a message can be dropped so honest leader can never form a block and bad leader will

But nobody will know whether bad formed or honest didn’t form

Honest can’t prove it didn’t form it and bad won’t prove it formed

So it should continue indistinguishably from whether they formed or not, and there should be some leader that happens unboundedly many times

In other words we can consider an exec where that message that drops always drops

In this pre GST attack

(So in this no hidden commits happen

}
\section{Implementation}
We implement \carousel on top of a high-performance open-source implementation of HotStuff\footnote{
\url{https://github.com/asonnino/hotstuff}
}~\cite{yin2019hotstuff}. We selected this implementation because it implements a Pacemaker~\cite{yin2019hotstuff}, contrarily to the implementation used in the original HotStuff paper\footnote{\url{https://github.com/hot-stuff/libhotstuff}}.
Additionally, it provides well-documented benchmarking scripts to measure performance in various conditions, and it is close to a production system (it provides real networking, cryptography, and persistent storage). 
It is implemented in Rust, uses Tokio\footnote{\url{https://tokio.rs}} for asynchronous networking, ed25519-dalek\footnote{\url{https://github.com/dalek-cryptography/ed25519-dalek}} for elliptic curve based  signatures, and data-structures are persisted using RocksDB\footnote{\url{https://rocksdb.org}}. It uses TCP to achieve reliable point-to-point channels, necessary to correctly implement the distributed system abstractions.
By default, this HotStuff implementation uses traditional round-robin to elect leaders; we modify its \texttt{LeaderElector} module to use \carousel instead. Implementing our mechanism requires to add less than 200 LOC, and does not require any extra protocol message or cryptographic tool.
We are open-sourcing \carousel\footnote{
\ifdefined\cameraReady
\url{https://github.com/asonnino/hotstuff/tree/leader-reputation}
\else
Link omitted for blind review.
\fi
} along with any measurements data to enable reproducible results\footnote{
\ifdefined\cameraReady
\url{https://github.com/asonnino/hotstuff/tree/leader-reputation/data}
\else
Link omitted for blind review.
\fi
}.
\section{Evaluation}
%
We evaluate the throughput and latency of HotStuff equipped \carousel through experiments on Amazon Web Services (AWS). We then show how it improves over the baseline round-robin leader-rotation mechanism.
We particularly aim to demonstrate that \carousel (i) introduces no noticeable performance overhead when the protocol runs in ideal conditions (that is, all parties are honest) and with small committees, and (ii) drastically improves both latency and throughput in the presence of crash-faults. Note that evaluating BFT protocols in the presence of Byzantine faults is still an open research question~\cite{twins}.

We deploy a testbed on AWS, using \texttt{m5.8xlarge} instances across 5 different AWS regions: N. Virginia (us-east-1), N. California (us-west-1), Sydney (ap-southeast-2), Stockholm (eu-north-1), and Tokyo (ap-northeast-1). Parties are distributed across those regions as equally as possible. Each machine provides 10Gbps of bandwidth, 32 virtual CPUs (16 physical core) on a 2.5GHz, Intel Xeon Platinum 8175, 128GB memory, and run Linux Ubuntu server 20.04. 

In the following sections, each measurement in the graphs is the average of 5 independent runs, and the error bars represent one standard deviation. 
Our baseline experiment parameters are: 10 honest parties, a block size of 500KB, a transaction size of 512B, and one benchmark client per party submitting transactions at a fixed rate for a duration of 5 minutes. We then crash and vary the number of parties through our experiments to illustrate their impact on performance. The leader timeout value is set to 5 seconds for committees of 10 and 20, and increased to 10 seconds for committees of 50. When referring to \emph{latency}, we mean the time elapsed from when the client submits the transaction to when the transaction is committed by one party. We measure it by tracking sample transactions throughout the system.

\subsection{Benchmark in Ideal Conditions}
\figureref{fig:happy-path} depicts the performance of HotStuff with both \carousel and the baseline round-robin running with 10, 20 and 50 honest parties. For small committees (10 parties), the performance of the baseline round-robin HotStuff is similar to HotStuff equipped with \carousel. We observe a peak throughput around 70,000 tx/s with a latency of around 2 seconds. This illustrates that the extra code required to implement \carousel has negligible overhead and does not degrade performance when the committee is small.
When increasing the committee (to 20 and 50 parties), HotStuff with \carousel greatly outperforms the baseline: the bigger the committee, the bigger the performance improvement. With 50 nodes, the throughput of our mechanism based HotStuff increases by over 2x with respect to the baseline, and remains comparable to the 10-parties testbed. After a few initial timeouts, \carousel has the benefit to focus on electing performant leaders. Leaders on more remote geo-locations that are typically slower are elected less often, the protocol is thus driven by the most performant parties. Latency is similar for both implementations and around 2-3 seconds.

\begin{figure*}[t]
\vspace{-0.4cm}
\centering
\includegraphics[width=\textwidth]{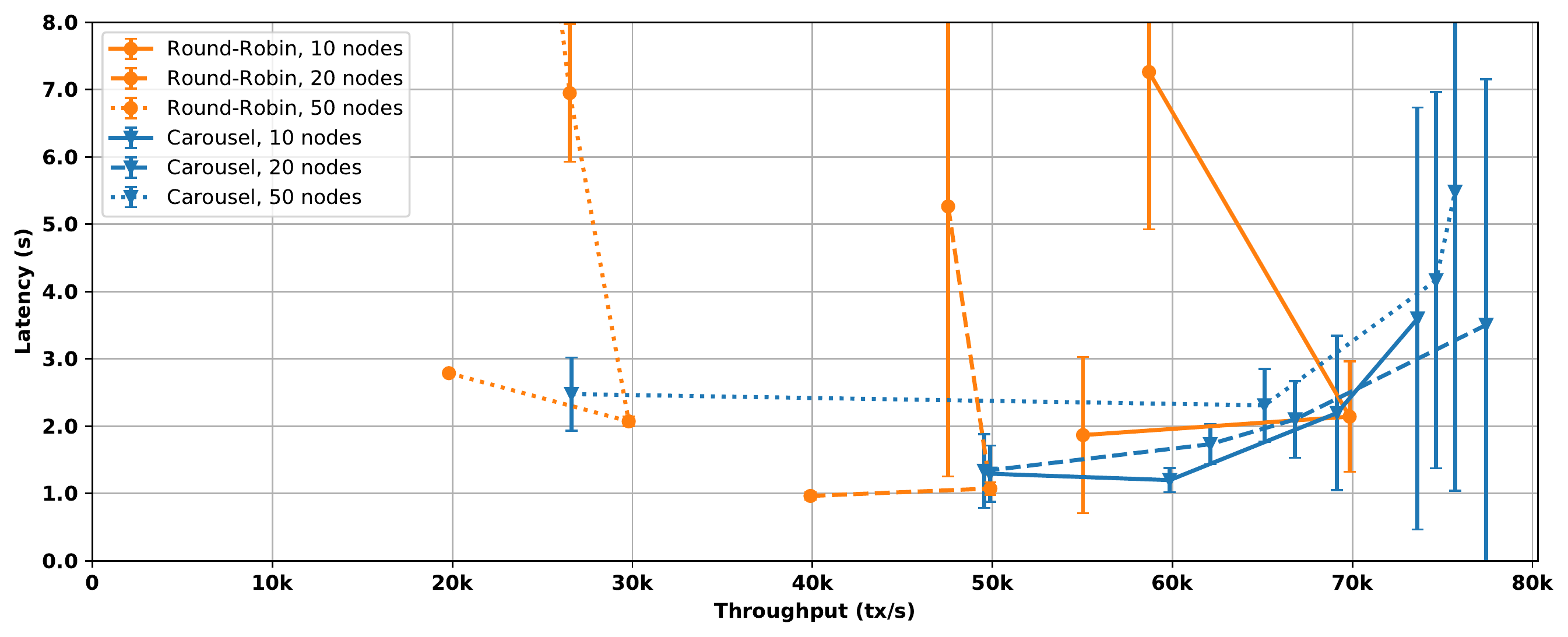}
\vspace{-0.7cm}
\caption{
Comparative throughput-latency performance of HotStuff equipped with \carousel and with the baseline round-robin. WAN measurements with 10, 20, 50 parties. No party faults, 500KB maximum block size and 512B transaction size.
}
\label{fig:happy-path}
\vspace{-0.3cm}
\end{figure*}

\subsection{Performance under Faults}
\figureref{fig:dead-nodes} depicts the performance of HotStuff with both \carousel and the baseline round-robin when a set of 10 parties suffers 1 or 3 crash-faults (the maximum that can be tolerated). The baseline round-robin HotStuff suffers a massive degradation in throughput as well as a dramatic increase in latency. For three faults, the throughput of the baseline HotStuff drops over 30x and its latency increases 5x compared to no faults.
In contrast, HotStuff equipped with \carousel maintains a good level of throughput: our mechanism does not elect crashed leaders, the protocol continues to operate electing leaders from the remaining active parties and is not overly affected by the faulty ones.
The reduction in throughput is in great part due to losing the capacity of faulty parties. 
When operating with 3 faults, \carousel provides a 20x throughput increase and about 5x latency reduction with respect to the baseline round-robin.

\begin{figure*}[t]
\vspace{-0.4cm}
\centering
\includegraphics[width=\textwidth]{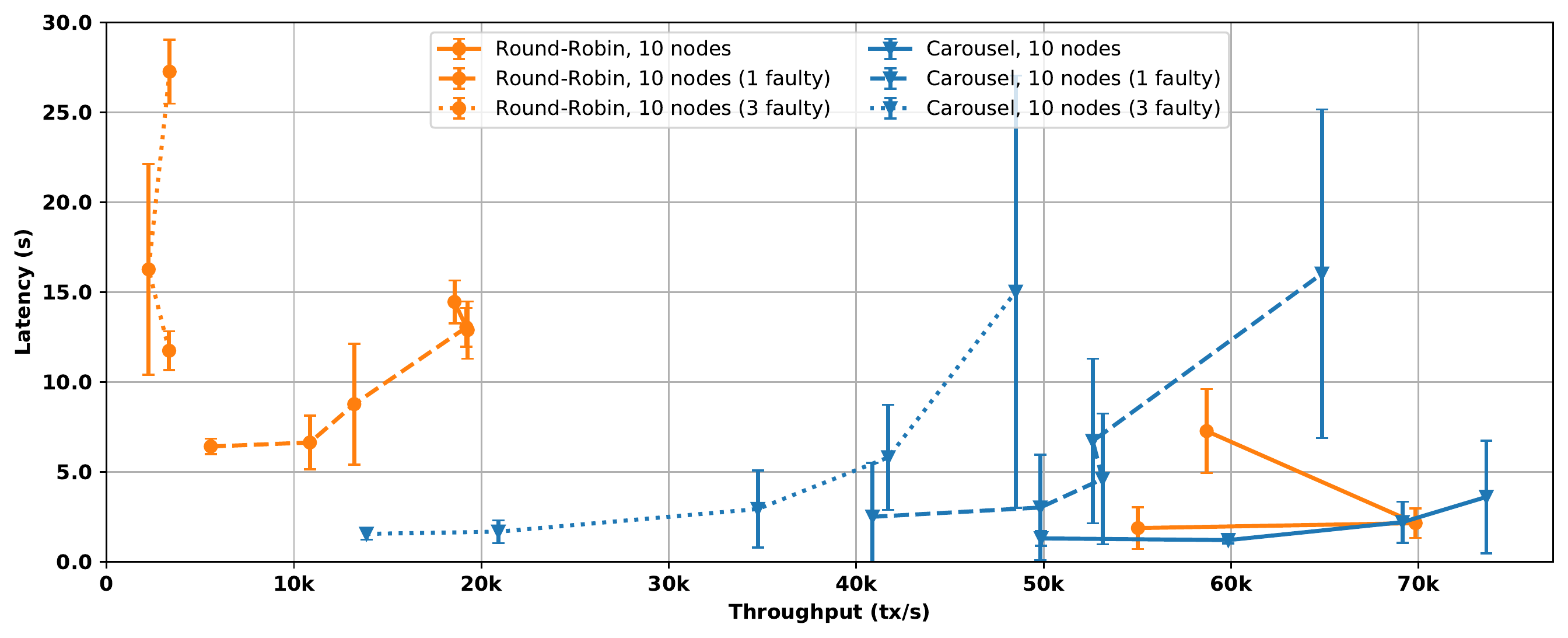}
\vspace{-0.7cm}
\caption{
Comparative throughput-latency performance of HotStuff equipped with \carousel and with the baseline round-robin. WAN measurements with 10 parties. Zero, one and three party faults, 500KB maximum block size and 512B transaction size.
}
\label{fig:dead-nodes}
\vspace{-0.3cm}
\end{figure*}


\figureref{fig:time} depicts the evolution of the performance of HotStuff with both \carousel and the baseline round-robin when gradually crashing nodes through time. For roughly the first minute, all parties are honest; we then crash 1 party (roughly) every minute until a maximum of 3 parties are crashed. The input transaction rate is fixed to 10,000 tx/s throughout the experiment. Each data point is the average over intervals of 10 seconds.
For roughly the first minute (when all parties are honest), both systems perform ideally, timely committing all input transactions. Then, as expected, the baseline round-robin HotStuff suffers from temporary throughput losses when a crashed leader is elected. Similarly, its latency increases with the number of faulty parties, and presents periods where no transactions are committed at all.
In contrast, HotStuff equipped with \carousel delivers a stable throughput by quickly detecting an eliminating crashed leaders. Its latency is barely affected by the faulty parties.
This graph clearly illustrates how \carousel allows HotStuff to deliver a seamless client experience even in the presence of faults.

\begin{figure*}[t]
\vspace{-0.4cm}
\centering
\includegraphics[width=\textwidth]{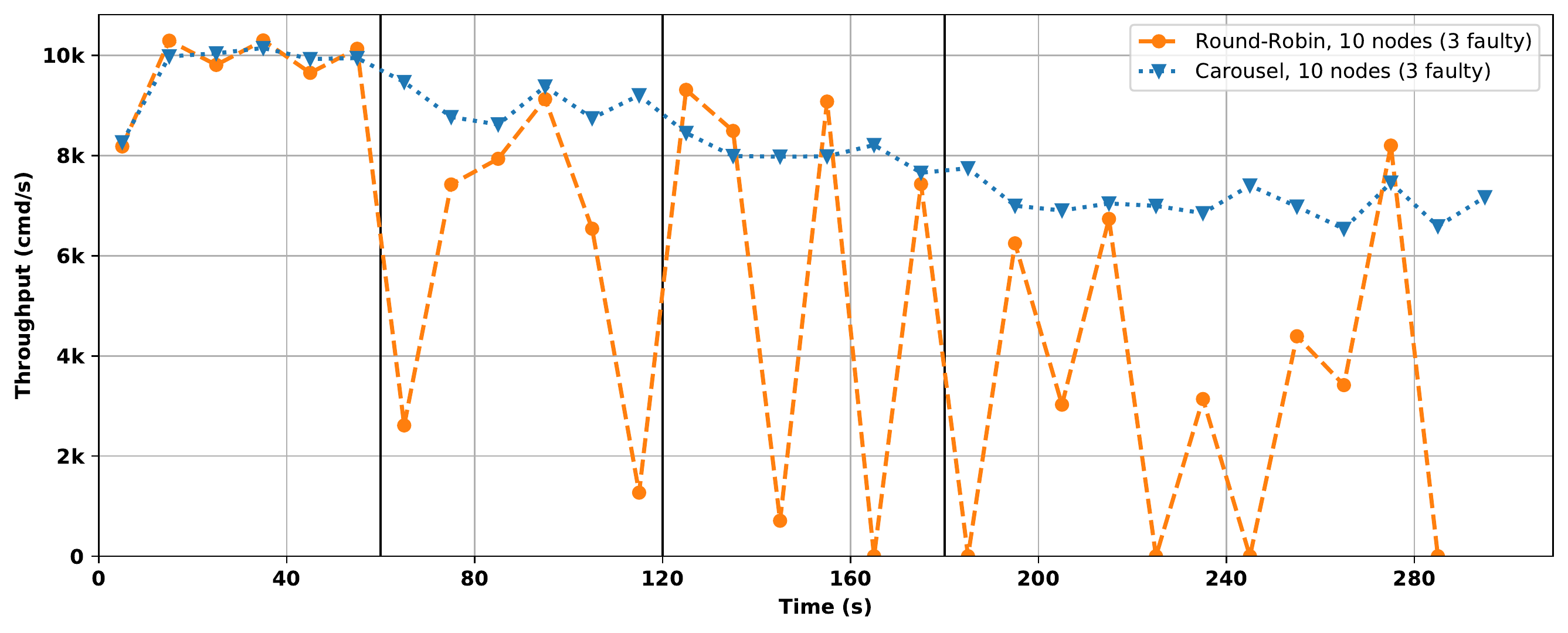}
\includegraphics[width=\textwidth]{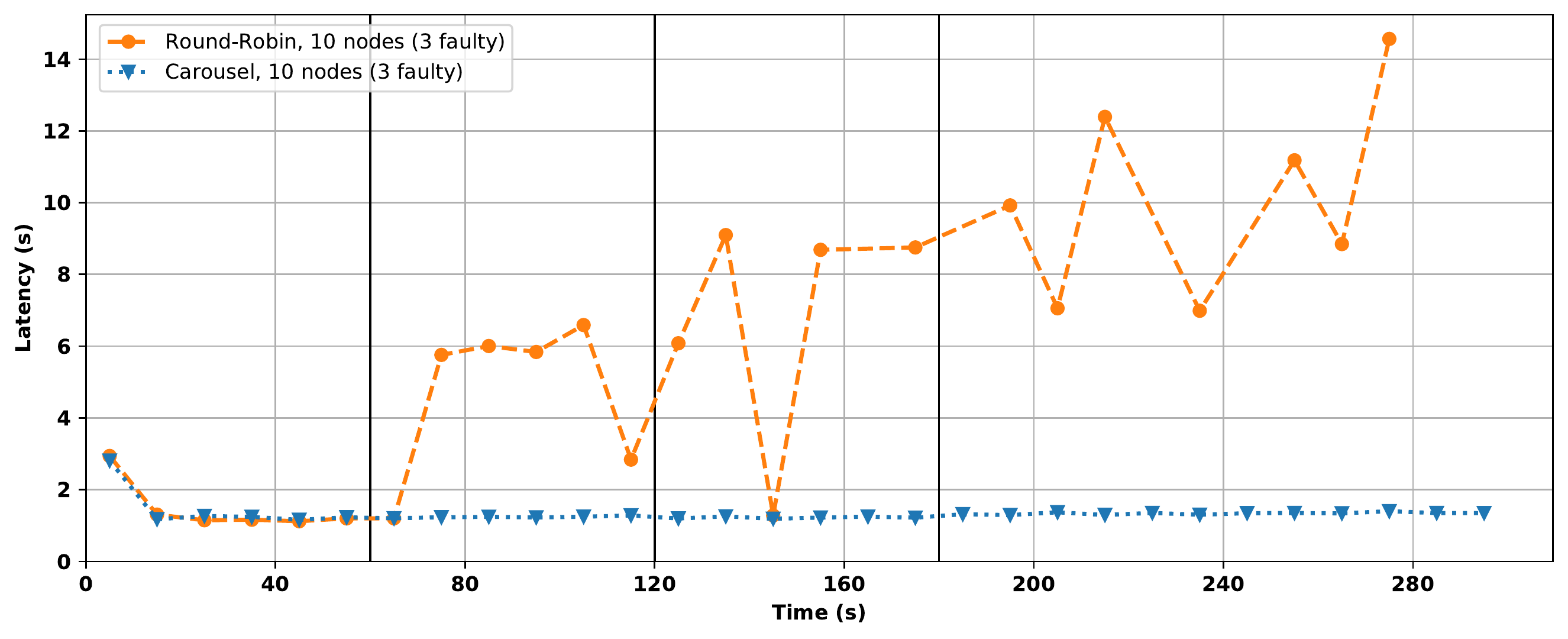}
\vspace{-0.7cm}
\caption{
Comparative performance of HotStuff equipped with \carousel and with the baseline round-robin when gradually crashing nodes through time. The input transactions rate is fixed to 10,000 tx/s; 1 party (up to a maximum of 3) crashes roughly every minute. WAN measurements with 10 parties, 500KB maximum block size and 512B transaction size.
}
\label{fig:time}
\vspace{-0.3cm}
\end{figure*}

\section{Conclusions}

Leader-rotations mechanisms in chaining-based SMR protocols were previously overlooked. 
Existing approaches degraded performance by keep electing faulty leaders in crash-only executions.
We captured the practical requirement of leader-rotation mechanism via a Leader-utilization property, use it define the \smr problem, and described an algorithm that implements it.
That is, we presented a locally executed algorithm to rotate leaders that achieves both: Leader-utilization in crash-only executions and Chain-quality in Byzantine ones.
We evaluated our mechanism in a Hotstuff-based open source system and demonstrated drastic performance improvements in both throughput and latency compared to the round-robin baseline.


\newpage






\bibliographystyle{plain}
\bibliography{references}


\begin{subappendices}
\renewcommand{\thesection}{\Alph{section}}%
\newpage
\section{Correctness}

\label{appendix}

\begin{lemma}
\label{lemma:smr_liveness}
If \code{choose\_leader} returns the same honest party at all honest parties for infinitely many rounds, then each honest party commits an unbounded number of blocks.
\end{lemma}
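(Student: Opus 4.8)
The plan is to reduce the claim to showing that, in each round where all honest parties elect the same honest leader and which occurs after GST, every honest party commits a fresh block. Call such a round \emph{good}. By hypothesis there are infinitely many rounds in which \code{choose\_leader} returns a common honest leader, and since GST occurs at a finite time, infinitely many of them are good. If I can show that every honest party commits a new block in every good round, then---because distinct good rounds yield blocks with distinct round numbers---each honest party commits an unbounded number of blocks, which is exactly the conclusion.

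So fix a good round $r$ with common honest leader $\ell$. First I would establish that all honest parties return the same certified block of round $r$. Each honest party invokes $LBR(r,\ell)$ in~\lineref{l:lbr} upon its \code{new\_round}(r) notification; by the Pacemaker guarantees these notifications, and hence the invocations, fall within a $\delta$-window after GST, so they constitute $\geq 2f+1$ LBR-synchronized($\ell$) invocations (there are at least $2f+1$ honest parties since $f<n/3$). Since $\ell$ is honest, the LBR Progress property then yields that all honest parties return the same certified block $B_r$ authored by $\ell$ with round number $r$.

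The main obstacle is the second step: showing the commit guard $commit\_head \longrightarrow B_r$ of~\lineref{l:commit} actually fires at every honest party. Here I would use that a party processes rounds in increasing order, so at the start of round $r$ its $commit\_head$ was assigned in~\lineref{l:update_head} during a strictly earlier round $r' < r$ from an LBR return value, whose round number is at most $r' < r$; thus $commit\_head$ has round number strictly smaller than that of $B_r$, and in particular $commit\_head \neq B_r$. Both $commit\_head$ and $B_r$ are certified blocks returned from LBR invocations to an honest party, so by the LBR Agreement property they are comparable: $commit\_head \longrightarrow B_r$ or $B_r \longrightarrow commit\_head$. Since round numbers strictly increase along a block's implied chain while $commit\_head$ has the smaller round number, the second case is impossible, leaving $commit\_head \longrightarrow B_r$. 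Hence the guard holds and the party commits $B_r$ in round $r$, updating $commit\_head$ to $B_r$.

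Finally I would assemble the pieces: every honest party commits $B_r$ in each good round $r$, the blocks $B_r$ for distinct good rounds are distinct (distinct round numbers), and there are infinitely many good rounds; therefore each honest party commits an unbounded number of blocks. The only delicate points to argue carefully are the Pacemaker-to-LBR-synchronized reduction (identical to its use in~\lemmaref{lemma:some_commit_per_correct}) and the monotonicity of round numbers along implied chains, which I would record as a standing property of the block structure.
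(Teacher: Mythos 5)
Your proof is correct and follows essentially the same route as the paper's: infinitely many of the hypothesized rounds occur after GST, the Pacemaker turns the common honest leader into $2f+1$ LBR-synchronized($\ell$) invocations, and the LBR Progress property yields a common certified block that every honest party commits. The only difference is that you additionally verify the commit guard $commit\_head \longrightarrow B$ via LBR Agreement and round-number monotonicity, a step the paper's proof leaves implicit; this is a harmless (indeed welcome) elaboration, not a different argument.
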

\begin{proof}
If \code{choose\_leader} returns the same honest party at all honest parties for infinitely many rounds, then there are infinitely many rounds after GST for which it does so.
Let $r$ be such a round.
By the Pacemaker guarantees, all honest parties make LBR-synchronized($\ell$) invocations with the same honest leader $\ell$ returned from the \code{choose\_leader} procedure.
By the LBR Progress property, they all return a certified block $B$ and commit it at~\lineref{l:commit}.
\end{proof}

\progressindis*
\begin{proof}
Let $\pi_1$ be a crash-only execution, 
  such that round $r$ has $k \geq 2f+1$ \emph{LBR-synchronized($\ell$) invocations} with a leader $\ell$ that is alive at round $r$. 
If $\ell$ is honest, then the LBR Progress property concludes the proof.

Otherwise, $\ell$ is faulty and by definition it crashes in round $> r$.
Let $\pi_2$ be a crash-only execution that is identical to $\pi_1$ until $\ell$ crashes, and the rest of $\pi_2$ is an arbitrary execution where the honest parties in $\pi_1$ remain honest but $\ell$ never crashes and is also honest.
Thus, in $\pi_2$ the preconditions of the LBR Progress property hold and all $k$ \emph{LBR-synchronized($\ell$) invocations} return a certified $B$ with round number $r$ authored by $\ell$.

An $LBR(r, \ell)$ invocation by any party $p$ completes within $\Delta_l$ time, and starts immediately after Pacemaker's $\code{new\_round(r)}$ notification at $p$ (because \code{choose\_leader} is computed locally and takes $0$ time).
By Pacemaker's guarantees, no party receives $\code{new\_round}(r+1)$ notification until $\Delta_p = \Delta_l$ time after the last $\code{new\_round}(r+1)$ notification at some party, hence all $LBR(r, \ell)$ invocations must complete before any party receives a $\code{new\_round}(r+1)$ notification.

$\pi_1$ and $\pi_2$ are identical until $\ell$ crashes, which must happen after $\ell$ receives its $\code{new\_round}(r+1)$ notification from the Pacemaker.
This is because $\ell$ is alive in round $r$ and follows the protocol, invoking $LBR$ in round $r+1$ after receiving the $\code{new\_round}(r+1)$ notification.
As a result, $\pi_1$ and $\pi_2$ are indistinguishable to all $LBR(r, \ell)$ invocations, and the $k$ \emph{LBR-synchronized($\ell$) invocations} in $\pi_1$
return certified block $B$ with round number $r$ authored by $\ell$ as in $\pi_2$, as desired.
\end{proof}
\end{subappendices}

\end{document}